\documentclass[aps, prx, 10pt, superscriptaddress, longbibliography, twocolumn]{revtex4-2}

\usepackage{array} 

\usepackage[table, dvipsnames]{xcolor} 
\usepackage[linesnumbered, ruled]{algorithm2e} 
\usepackage{amsmath} 
\usepackage{amssymb} 
\usepackage{amsthm} 
\usepackage{bm, bbm} 
\usepackage{booktabs} 
\usepackage{enumitem} 
\usepackage{geometry} 
\usepackage{hyperref} 
\usepackage{ifthen} 
\usepackage{mathrsfs} 
\usepackage{mathtools} 
\usepackage{multirow} 
\usepackage{mwe} 
\usepackage[sort&compress]{natbib} 
\usepackage{physics} 
\usepackage{subcaption}
\usepackage{thm-restate} 
\usepackage{tikz} 
\usepackage{xparse} 
\usepackage[capitalise]{cleveref} 
\usepackage{pifont}
\usepackage{cancel}

\hypersetup{colorlinks=true, linkcolor=blue, citecolor=blue, urlcolor=blue}

\definecolor{cA}{HTML}{0072BD}
\definecolor{cB}{HTML}{EDB120}
\definecolor{cC}{HTML}{77AC30}
\definecolor{cD}{HTML}{D95319}

\newcommand{\Cpp}{C\nolinebreak\hspace{-.05em}\raisebox{.4ex}{\tiny\bf +}\nolinebreak\hspace{-.10em}\raisebox{.4ex}{\tiny\bf +}}
\newcommand{\defeq}{\coloneqq}

\newtheorem{lemma}{Lemma}

\newtheorem{remark}{Remark}

\crefname{section}{Sec.}{Secs.}
\Crefname{section}{Section}{Sections}

\allowdisplaybreaks[4]

\begin{document}
\title{Bounded-depth spacetime lattice surgery for resource-efficient fault-tolerant quantum computation}

\author{Kou Hamada}
\email{kouhamada2002@gmail.com}
\affiliation{
Graduate School of Information Science and Technology,
The University of Tokyo,
7-3-1 Hongo, Bunkyo-ku, Tokyo 113-8656, Japan
}

\author{Hiroki Hamaguchi}
\email{hirokihamaguchi0331@gmail.com}
\affiliation{
Graduate School of Information Science and Technology,
The University of Tokyo,
7-3-1 Hongo, Bunkyo-ku, Tokyo 113-8656, Japan
}

\author{Yosuke Ueno}
\affiliation{
Graduate School of Information Science and Technology,
The University of Tokyo,
7-3-1 Hongo, Bunkyo-ku, Tokyo 113-8656, Japan
}
\affiliation{
Center for Quantum Computing, RIKEN,
2-1 Hirosawa, Wako, Saitama 351-0198, Japan
}

\author{Yasunari Suzuki}
\affiliation{
Center for Quantum Computing, RIKEN,
2-1 Hirosawa, Wako, Saitama 351-0198, Japan
}

\author{Teruo Tanimoto}
\affiliation{
Faculty of Information Science and Electrical Engineering,
Kyushu University,
744 Motooka, Nishi-ku, Fukuoka 819-0395, Japan
}

\author{Nobuyuki Yoshioka}
\email{ny.nobuyoshioka@gmail.com}
\affiliation{
International Center for Elementary Particle Physics,
The University of Tokyo,
7-3-1 Hongo, Bunkyo-ku, Tokyo 113-0033, Japan
}

\begin{abstract}
    Fault-tolerant quantum computing based on lattice surgery requires place-and-route compilation with low spacetime overhead. Routing, in particular, faces a basic tension between suppressing path conflicts through greater spatial allocation and exploiting the time direction to realize ancilla-efficient spacetime routing. Existing approaches do not fully resolve this trade-off while retaining compatibility with inner factory layouts and termination guarantees. Here we introduce \emph{double-slice routing}, a constant-depth spacetime-routing method that uses two consecutive time slices with a guarantee that its kink-parity correction terminates under both planar and stacked architectures. We numerically benchmark the resulting compiler on Hamiltonian-simulation workloads to show that double-slice routing reduces compilation cost by up to a factor of 2.4 over a single-slice baseline. Compared to projective routing, an existing method that allows an unbounded number of time slices per path, double-slice routing achieves smaller circuit volume with only a marginal execution-time penalty. Combined with a cultivation-compatible mapping optimization, the overall improvement reaches up to 7.5-fold over a naive single-slice compilation baseline. These results identify double-slice routing as a practically useful operating point in lattice-surgery compilation and show the substantial benefit in joint optimization of mapping and routing.
\end{abstract}

\maketitle

\section{Introduction}
Fault-tolerant quantum computing (FTQC) will require the compilation of quantum algorithms into
a restricted set of fault-tolerant instructions tailored to a given quantum error-correcting code and
hardware geometry.
Among the leading approaches, two-dimensional (2D) topological stabilizer codes such as the surface code~\cite{Kitaev2003Anyons, Fowler2012SurfaceCodes, Dennis2002TopologicalMemory}
and the color code~\cite{Bombin2007ColorCode} admit a particularly practical paradigm: \emph{lattice surgery}, where
joint Pauli measurements are realized by merging and splitting code patches
on a 2D grid~\cite{Horsman2012LS,Landahl2014ColorLS} which complements the restricted set of transversal operations by allowing gate teleportation.
Lattice surgery can also be viewed more broadly as a measurement-based form of code deformation (and, in many cases, gauge fixing),
providing a unifying language across patch-based FTQC schemes~\cite{Vuillot_2019}.

A central approach for compilation is based on the {\it place-and-route} paradigm: given a logical circuit, one must (i) assign logical data qubits and auxiliary resources (routing ancillae and magic state factories)
to locations on a qubit plane, and then (ii) schedule the instructions (including lattice-surgery routing) under spacetime constraints.
The quality of this compilation can significantly affect both the total execution time and the overall spacetime volume, which consequently impacts the achievable logical error rate at a fixed physical error budget.

The dominant bottleneck of FTQC depends on the available hardware scale and architectural choices.
In many large-scale resource estimates, the cost is heavily influenced by the supply of non-Clifford resources,
typically modeled through magic state preparation and distillation pipelines~\cite{litinskiGameSurfaceCodes2019,FowlerGidney2018LowOverhead}.
At the same time, recent progress in magic state preparation, including more efficient factory designs and direct fault-tolerant preparation protocols, suggests that there exist practically relevant regimes where routing and scheduling overheads become a comparable contributor to end-to-end cost~\cite{ChamberlandNoh2020DirectMagic}.
Indeed, efficient lattice-surgery scheduling has attracted growing attention: \citet{hamadaEfficientHighperformanceRouting2024} reduce scheduling to a 3D path-embedding problem and report substantial speedups over greedy baselines on FTQC benchmarks,
while related work studies multiqubit lattice-surgery scheduling via forest/Steiner-tree packing~\cite{silvaMultiqubitLatticeSurgery2024}.

\begin{figure*}
    \centering
    \includegraphics[width=\linewidth]{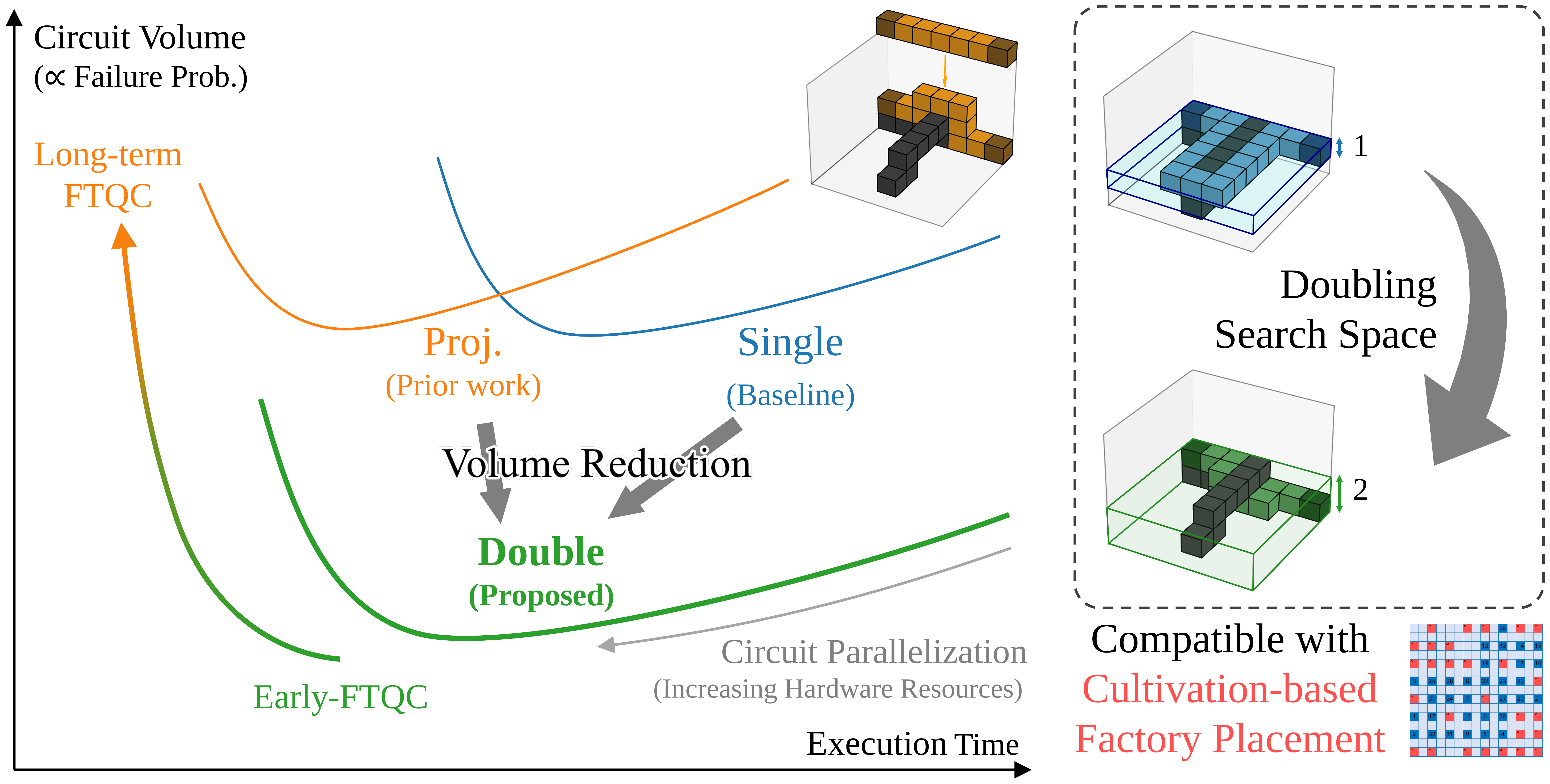}
    \caption{Schematic diagram of this paper. Double-slice routing is proposed as a spacetime routing method that \mbox{(i) reduces} the circuit volume, \mbox{(ii) provides} a theoretical guarantee on the termination of the compilation procedure, and \mbox{(iii) is} compatible with cultivation-based qubit mapping.
    }
    \label{fig:schematic}
\end{figure*}

In this work, we propose a compilation framework that flexibly exploits spacetime resources in both \emph{mapping} and \emph{routing} as described in \cref{fig:schematic}. 
Our main contributions can be summarized as follows.
\begin{itemize}
    \item First, we formulate a joint mapping problem that optimizes the placement of data patches, routing ancillae, and magic state factory (MSF) patches in a unified manner.
    Motivated by the recent advancements in magic state cultivation~\cite{gidneyMagicStateCultivation2024}, this formulation enables cultivation-compatible inner factory layouts and extends prior placement-oriented approaches~\cite{laoMappingLatticeSurgerybased2018}.
    \item Second, we develop a spacetime-efficient routing framework for both 2D and 2.5D (or multiplanar) architectures. This routing contribution has two components: generalized spacetime routing for 2.5D layouts and introducing {\it double-slice routing}, a constant-depth method that mitigates path collisions. We show that the double-slice routing offers both quantitative and qualitative benefits: reduction in spacetime volume and compatibility between cultivation-aware inner layout and the termination guarantee of path finding. To our knowledge, our work is the first to reconcile the executable guarantee for spacetime routing and cultivation-aware geometry. 
\end{itemize}

We provide numerical evaluation on our approach for Hamiltonian-simulation workloads that are widely regarded as promising candidates for demonstrating early advantages of FTQC, and observe consistent reductions in compilation cost.
Across a range of architectures and benchmarks, our mapping optimization achieves up to 3.2$\times$ improvements over baselines, and double-slice routing yields up to 2.4$\times$ reductions in spacetime volume compared with single-slice strategies.
The overall improvement reaches up to 7.5-fold over the naive baseline,
highlighting a practical space--time trade-off in lattice-surgery compilation.

The rest of the paper is organized as follows.
\begin{itemize}
    \item \Cref{sec:setup} introduces the fundamental concepts of surface code computation and decomposes the lattice-surgery scheduling problem into two primary subproblems: the mapping and routing problems.

    \item \Cref{sec:mapping} addresses the mapping problem, which involves assigning circuit operands and MSFs to specific patches on the qubit plane. We propose an optimization algorithm based on Simulated Annealing (SA) that also accounts for the recent magic state cultivation paradigm. This technique is applicable to \textit{inner factory layouts}, in which MSFs can be embedded into the bulk region of the qubit plane.

    \item \Cref{sec:routing} focuses on the routing problem, a problem of determining efficient lattice-surgery paths. We extend spacetime routing to 2.5D architectures and introduce \textit{double-slice routing}, which combines the strengths of baseline methods and existing \textit{projective routing} while maintaining compatibility with inner factory layouts.

    \item \Cref{sec:numerical_experiments} presents numerical experiment results to evaluate the proposed algorithms. We demonstrate their effectiveness in terms of total execution time and circuit volume using the Hamiltonian-simulation circuits based on quantum singular value transformation and Trotterization.

    \item \Cref{sec:discussion} concludes the paper and suggests future directions.
\end{itemize}

The code used for the numerical experiments, mainly written in Python 3 and \Cpp{17}, is fully available on GitHub~\cite{github}.

\section{Problem Setup} \label{sec:setup}

In this section, we formalize the lattice-surgery scheduling problem.
We provide an overview of the problem, followed by a detailed discussion on performance metrics, hardware constraints, and the benchmarked quantum circuits.

\subsection{Lattice-Surgery Scheduling Problem}\label{subsec:lssp}

In the following, we adopt a patch-level abstraction of surface-code-based FTQC on a fixed grid architecture motivated by superconducting devices: the hardware provides a finite set of logical surface-code patches of a fixed code distance $d$ arranged on a grid geometry.
Time is discretized in units of $d$ code cycles, or code beats, and non-local logical operations are realized by lattice surgery, i.e., joint Pauli measurements performed between patch boundaries using intermediate ancilla resources, which are referred to as the bus qubits.

Throughout the entire execution of the quantum circuit, we assume that each logical patch is assigned a single role: data patches that store logical information (including the ancilla for block encoding), bus patches that provide routing resources to mediate joint measurements, and MSF patches that supply magic states at a fixed preparation rate.
Under this static role assignment, a lattice-surgery operation temporarily occupies a set of patches, namely its operand patches together with a connected chain of bus patches.
Thus, concurrent operations must respect resource exclusivity on the shared grid (see \cref{subsec:hardware} for architectural details).

Starting from a quantum circuit given in the Clifford+T gate set, we assume a standard compilation step that reduces non-local logical operations to a small set of lattice-surgery-native QEC micro-operations.
In this work, we define the primitive multiqubit QEC micro-operations as
(i) joint $XX$ measurements, (ii) joint $ZZ$ measurements, and (iii) logical patch movement.
Higher-level instructions are realized as compositions of these primitive QEC micro-operations; for instance, the $T$ gate is implemented through gate teleportation using a magic state. While we account for the cost of magic state preparation, we assume that each magic state is prepared immediately before the corresponding lattice-surgery operation is performed, thereby incorporating the preparation overhead into the routing path cost. 

In contrast, we omit local Clifford gates (e.g., $H$ and $S$ gates) in this scheduling formulation, because their spatial and temporal overheads are negligible compared to those of lattice-surgery operations, particularly in the context of large-scale circuits. Consequently, every instruction involves at most two non-bus operand patches in total (data and/or factory), while the number of intermediate bus patches used to connect them is not bounded {\it a priori} by the instruction itself. See \cref{app:conversion} for a detailed discussion on this compilation and formulation.

The lattice-surgery scheduling problem can be informally stated as follows.
For each instruction in the given program, decide its execution timing and allocation of bus patches that form a valid connection between its operands. The solution must satisfy constraints such as geometric connectivity, boundary-type compatibility, and the exclusivity conditions imposed by the finite patch grid.

We solve the scheduling problem by dividing it into the following two subproblems.
\begin{enumerate}
    \item {\it Mapping problem (\cref{sec:mapping})}: Assign the role of every logical patch, under assumptions that (i) the total number of logical patches is fixed and (ii) each patch is dedicated to a specific role (data, bus, or magic) during the entire quantum computation.
    \item {\it Routing problem (\cref{sec:routing})}: Determine the most efficient lattice surgery paths, under hardware constraints explained in \cref{subsec:hardware}.
\end{enumerate}

\begin{figure*}[t]
    \centering
    \includegraphics[width=1.9\columnwidth]{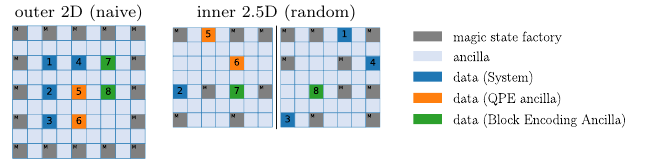}
    \caption{Examples of layouts. In the 2D layout, the data chip consists of a single qubit plane structured as an array of surface-code patches, whereas in the 2.5D layout, an additional qubit plane is stacked on top. In the outer factory layout, MSF patches are positioned along the perimeter, whereas in the inner factory layout, they are placed within the interior.         The numbers labeled on the data qubits indicate their indices.}
    \label{fig:layout}
\end{figure*}

\subsection{Performance Metrics}
\label{sec:performance_metric}

For the evaluation of the performance of compilation methods, we consider two metrics: {\it total execution time} and  {\it total circuit volume}. While the former directly reflects the overall time required to complete the quantum computation, the latter is also essential since it serves as a proxy for the total logical error rate; furthermore, since decoding cost scales super-linearly with volume, this metric directly dictates the classical computational burden. Thus, our primary objective is to minimize both metrics through mapping and routing optimization.

To facilitate the determination of lattice-surgery paths, we employ the spacetime routing technique~\cite{hamadaEfficientHighperformanceRouting2024} (see \cref{subsec:overview_3drouting}), which associates a lattice-surgery instruction with a spacetime path.
Consequently, we hereafter treat a lattice-surgery path as a spacetime path.
Let $X$ denote the set of patches.
We define a {\it voxel} as a spacetime unit $(x, t)$ activated by a path, where $x \in X$ is the spatial coordinate and $t \in \mathbb{N}$ is the activation time measured in code beats (Recall that $1$ code beat equals $d$ code cycles).

We first define the total execution time.
Let $K$ be the number of instructions, and for the $k$-th instruction ($1 \leq k \leq K$), we define the associated path $P_k$ as
\begin{equation*}    
    P_k = \{(x_{k, i}, t_{k, i})\}_i \subseteq X \times \mathbb{N},
\end{equation*}
where $(x_{k, i}, t_{k, i})$ denotes the $i$-th activated voxel of the $k$-th instruction.
Given the collection of all $K$ paths $P \defeq (P_1, \dots, P_K),$ the total execution time $T_P$ is defined  as
\begin{equation*}
    T_P \defeq \max_{1 \leq k \leq K} \; \max_{(x,t)\in P_k} t,
\end{equation*}
which means the last activation time over all paths.

We next define the total circuit volume.
Let $T_{\mathrm{LS}}(x)$ be the set of times when the patch $x$ is contained in lattice surgery:
\begin{equation*}
    T_{\mathrm{LS}}(x) \defeq \bigcup_{k=1}^{K} \{ t \mid (x, t) \in P_k\}.
\end{equation*}
Then, the circuit volume $V(x)$ for each patch is
\begin{equation*}
    V(x) \defeq
    \begin{cases}
        T_P                                      & \text{if $x$ is a data patch,} \\
        \abs{T_{\mathrm{LS}}(x)}\cdot (\tau + 1) & \text{if $x$ is an MSF patch,} \\
        \abs{T_{\mathrm{LS}}(x)}                 & \text{if $x$ is a bus patch,}
    \end{cases}
\end{equation*}
where $\tau$ is defined to be the fixed time interval required to prepare a single magic state.
For a data patch $x$, its circuit volume is the total execution time $T_P$, since error correction is required throughout the entire computation to maintain its data.
For an MSF patch $x$, the circuit volume accounts for the total cost of magic state preparation and consumption.
For a bus patch, its circuit volume is simply the number of time steps it is active.
Then, the total circuit volume $V_P$ is defined as the sum over the volume of individual patches:
\begin{equation*}
    V_P \defeq \sum_{x \in X} V(x).
\end{equation*}

\subsection{Hardware Constraints} \label{subsec:hardware}

When we consider the lattice-surgery scheduling problem, there is usually an implicit assumption on the hardware constraint that physical qubits cannot be moved around.
Specifically, we describe the geometrical arrangement of logical patches, boundary types of surface-code patches, and constraints on the possible roles of patches.

{\it Geometry of logical patches. }
Reflecting the experimental feasibility in a superconducting platform, it is commonly assumed that surface code patches are arranged on a 2D square lattice. Here, we refer to such an arrangement as the {\it 2D layout}.
An alternative emerging design is the {\it 2.5D layout}~\cite{liao2026breakingscalabilitybarriervertical}, in which a couple of 2D layers are stacked to improve the routing efficiency via benefiting from transversal CNOT operation or inter-layer lattice surgery~\cite{uenoHighPerformanceScalableFaultTolerant2024,viszlaiInterleavedLogicalQubits2025,pattisonHierarchicalMemoriesSimulating2025,rametteFaulttolerantConnectionErrorcorrected2024,haArchitecturesLatticeSurgerybased2025}. Although our discussion of 2.5D layouts primarily focuses on the biplanar architecture due to its experimental viability, our proposed approach is independent of this specific structure. Consequently, our theoretical framework can be extended to multiplanar configurations in most cases, depending on the architectural choices.
For both geometries, we adopt the conventional density of $1/4$ for data and MSF patches, as shown in the examples of \cref{fig:layout}.
The patches are arranged on a square grid, with the grid size chosen separately for each layer count $l$ as the minimum size required to accommodate all logical qubits, thereby making the comparison as fair as possible.
See \cref{app:other_architectures} for more details on our architectural assumptions.

{\it Boundary types of surface-code patches. }
Each edge of a surface-code patch supports only one type of syndrome measurement, either $X$ or $Z$.
We refer to these edges as the \emph{$X$-boundary} or \emph{$Z$-boundary}, based on their measurement type. A surface code patch allows for two possible boundary orientations upon initialization: (i) the top and bottom edges are $X$-boundaries, and the left and right edges are $Z$-boundaries; or (ii) the top and bottom edges are $Z$-boundaries, and the left and right edges are $X$-boundaries. To realize a joint $XX$ (resp.\ $ZZ$) measurement, both operand patches should be connected to the bus patches of the path along an $X$-boundary (resp.\ $Z$-boundary).

For simplicity, we assume that the boundary orientation of each patch remains fixed once it is initialized. We also assume that all the data and MSF patches share the same boundary orientation. In contrast, the bus patches can be initialized with either boundary orientation.

{\it Role of patches in the bulk of the grid.}
Each logical patch is assigned a fixed role, either data, bus, or MSF, throughout the execution of the quantum circuit. While data and bus patches may in principle be placed anywhere on the qubit plane, the placement of MSFs can require additional constraints. We refer to the rule governing the placement of MSFs as the factory layout.

We first consider an \textit{outer factory layout}, in which the MSF patches are restricted to the perimeter of the qubit plane. This restriction is consistent with several existing works~\cite{chamberlandUniversalQuantumComputing2022,beverlandSurfaceCodeCompilation2022}. The motivation is that magic state distillation typically incurs a substantial spatial overhead~\cite{litinskiGameSurfaceCodes2019,litinski2019magic}. Consequently, it is natural to place large distillation factories outside the main computational region, while supplying distilled magic states to the data region from the boundary.

In contrast, we also consider an \textit{inner factory layout}, in which the MSF patches may be allocated within the lattice similarly to data patches (\cref{fig:layout}).
This factory layout is motivated by the recently proposed technique of magic state cultivation~\cite{gidneyMagicStateCultivation2024} requiring only a single patch to prepare the magic state. Such compact implementations naturally enable the placement of MSFs inside the computational region itself.

For both layouts, we assume for simplicity that each MSF generates one magic state at a fixed rate. The required time interval is defined as the magic state preparation time, denoted by $\tau$ in code beats. We set the magic state preparation time as $\tau = 2$, with the scope of the discard rate becoming lower as the physical error rate improves. See \cref{subsubsec:tau_comparison} for the discussion on the choice of $\tau$.

Regarding outer factory layouts, $\tau = 0$ is another reasonable choice as employed in Ref.~\cite{beverlandSurfaceCodeCompilation2022}. This setting is particularly useful for highlighting the performance of the routing algorithm under ideal conditions, where external resources are assumed to supply magic states sufficiently at the perimeter of the qubit plane. We adopt this configuration in specific cases, for example, when comparing our scheduling algorithms with Ref.~\cite{beverlandSurfaceCodeCompilation2022} (see \cref{sec:numerical_experiments}).

\subsection{Benchmarked Quantum Circuits}
\label{subsec:benchmark_target}
As a benchmark target, we exclusively consider quantum algorithms that aim to simulate quantum many-body systems: the quantum singular value transformation~\cite{low2017optimal, low2019hamiltonian, gilyen2019quantum, martyn_2021} and Trotterization~\cite{suzuki1985general,childs_theory_2021}. For the former, we specifically focus on the cost of implementing the SELECT circuit for block encoding of a Hamiltonian, and for the latter, we consider second-order Trotterization.
Target Hamiltonians are chosen from versatile fields of physics, such as the Heisenberg model, the Fermi--Hubbard model, lattice gauge theory, and also the random local Hamiltonian. Also, see \cref{app:target} for details on the quantum algorithms and target Hamiltonians.
While in the main text we mainly discuss the result for the $J_1$-$J_2$ Heisenberg model on a square lattice, we provide details in the Appendix with data available via GitHub~\cite{github}.

Since Ref.~\cite{yoshiokaHuntingQuantumclassicalCrossover2024} demonstrated that parallelization is effective for SELECT circuits, we focus primarily on their parallelized versions.
Specifically, we refer to a SELECT circuit parallelized into $2^i$ threads as SELECT-$i$. We mainly employ $i = 5$ for its reasonable balance between execution time and resource requirement, while also utilizing $i = 6$ as a more resource-intensive case to highlight performance differences.

\section{Mapping Optimization}\label{sec:mapping}

As explained in \cref{sec:setup}, we decompose the lattice-surgery scheduling problem into two phased subproblems: mapping and routing problems. We begin the explanation with the mapping problem.

Mapping refers to determining the assignment between indexed data qubits and logical patches, as well as selecting which logical patches are used as MSF patches.
In this section, we introduce the optimization of mapping with simulated annealing, which drastically reduces the total execution time.
The mapping takes into account the recent advancement in the magic state cultivation \cite{hiranoLocalityAwarePauliBasedComputation2025}, i.e., we consider the inner factory layout where we can freely locate the MSFs in the qubit plane.

Considering the inner factory layout instead of the outer factory layout means that we remove the perimeter constraint of MSFs. It increases the search space and makes the mapping problem more flexible, but also more complex.
We show that the proposed simulated-annealing-based optimization can efficiently handle this generalized mapping problem and substantially reduce the total execution time.
In particular, for the 2.5D layout, the optimized inner factory layout nearly saturates the achievable reduction in total execution time.

\subsection{Problem Setting}\label{subsec:sa_setting}

To begin with, we mathematically formulate the higher-level instructions introduced in \cref{subsec:lssp}.
As detailed in \cref{app:conversion}, we convert quantum circuits into instruction sequences consisting solely of $\texttt{CX}$, $\texttt{MAGIC\_MOVE}$, and $\texttt{MAGIC\_MZZ}$, representing CNOT operations, magic-state movement operations, and Pauli-$ZZ$ measurements assisted by magic states, respectively.
These instructions only involve two patches and form pairwise connections realized as lattice-surgery paths.

Now, to evaluate the frequency of interactions among qubits in the quantum circuit, we define a heatmap $A$.
Although multiple MSF patches may exist, the instructions that consume a magic state do not specify which patch supplies the magic state.
We therefore model all MSFs as a single virtual qubit for the mapping optimization.
Let $n$ be the number of logical data qubits, where $Q = \{1, 2, \dots, n\}$ represents the set of these qubits, and let $n+1$ denote the virtual MSF.
Accordingly, the $k$-th instruction can be regarded as acting on a pair of qubits, which may include the virtual MSF qubit.
The two qubits involved are denoted by $p_{k,1}$ and $p_{k,2}$, with $1 \leq p_{k,1}, p_{k,2} \leq n+1$.
We define the $(i,j)$-th component of $A$ as the number of occurrences of the pair $\{i,j\}$:
\begin{equation*}
    A_{i,j}\defeq \sum_{k=1}^{K}\mathbbm{1}\left[\{p_{k,1}, p_{k,2}\}=\{i, j\}\right] \quad (1 \leq i,j \leq n+1).
\end{equation*}
Here, $\mathbbm{1}$ represents an indicator function.
The matrix $A$ indicates how frequently each data qubit is simultaneously involved with another qubit or with an MSF, thereby capturing structural properties of the circuit.

The key property of the heatmap $A$ is its sparsity.
For randomly generated circuits, this heatmap typically produces a dense, unstructured matrix.
In contrast, practical circuits such as the QSVT algorithm often exhibit sparsity.
The SELECT circuit for block encoding of a Hamiltonian consists of controlled Pauli gates, where the control register repeatedly interacts with the same patch in practice.
See also \cref{fig:heatmap}.
Also, for the Trotter circuit, a leading candidate for both phase estimation and dynamics simulation, sparsity is also commonly observed in models with geometrically local interactions.
Consequently, by pre-placing qubits, frequently interacting qubits can be located close to one another, whereas infrequently interacting qubits may be placed farther apart, thereby improving lattice surgery optimization.

\begin{figure}[t]
    \centering
    \includegraphics[width=0.9\columnwidth]{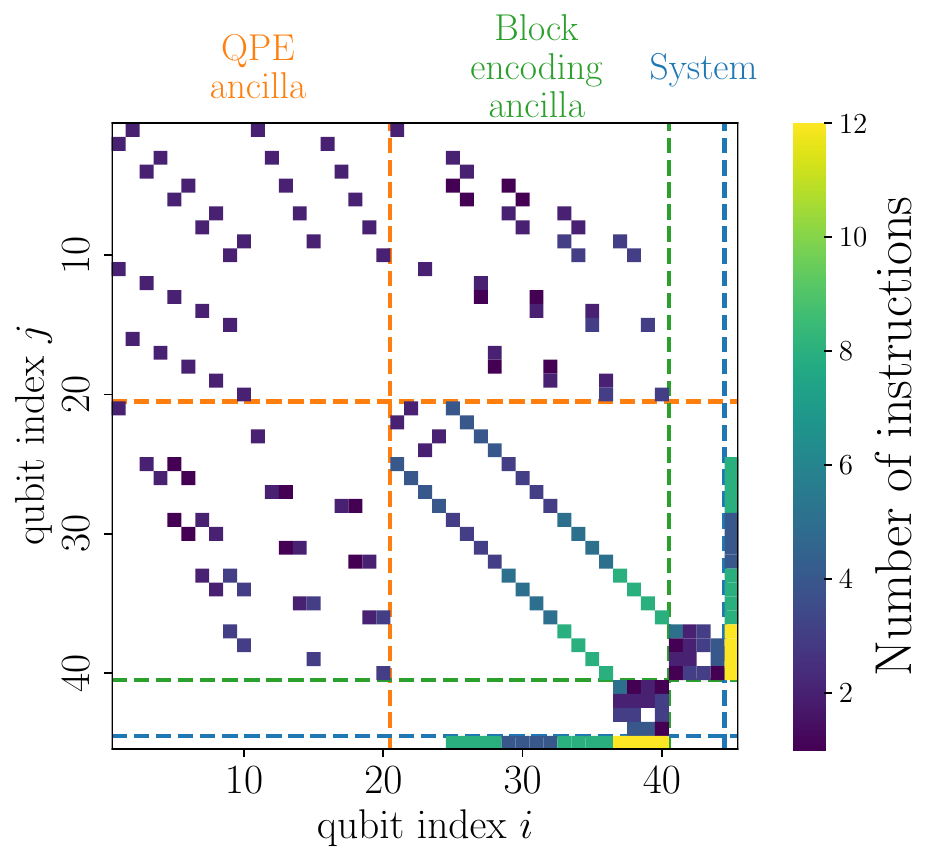}
    \caption{
        Heatmap of $A$ (\cref{subsec:sa_setting}) showing the number of instructions.
        In the SELECT-2 circuit simulating a 2D Heisenberg model (size = 2), the first 20 qubits correspond to QPE ancilla, the next 20 to block-encoding ancilla, and the last 4 to system qubits.
        The index $i = n+1$ corresponds to MSFs.
        Blank entries denote zeros.
        The matrix $A$ is highly sparse, which affects the optimization of the mapping.
    }
    \label{fig:heatmap}
\end{figure}

We now define an optimization variable for the mapping problem.
The subset $\bar{X} \subseteq X$ denotes the patches available for the placement of logical data qubits and MSFs, whose size is approximately 1/4 that of $X$ under our adopted qubit density.
The optimization variables are $\{x_i\}_{i=1}^n$ and a subset $X_{\mathrm{MSF}} \subseteq \bar{X}$, where $x_i \in \bar{X}$ denotes the patch to which data qubit $i$ is assigned, and $X_{\mathrm{MSF}}$ represents the patches occupied by MSFs.
Since the number of MSFs can influence the compilation results, we fix the number of MSFs $\abs{X_{\mathrm{MSF}}}$ to $n_{\mathrm{MSF}} \in \mathbb{N}$ in advance to ensure a fair comparison between 2D and 2.5D layouts.
Since each patch can host at most one object, all assignments must be distinct.
In the outer factory layout, we have to impose the additional constraint $X_{\mathrm{MSF}} \subseteq \partial \bar{X}$, where $\partial \bar{X}\subseteq X$ denotes the set of perimeter patches of the chip.
In contrast, the inner factory layout is obtained by removing this perimeter constraint.
More detailed specifications are provided in \cref{app:floorplan}.

The objective function for the mapping problem is formulated using the heatmap $A$.
The ideal objective function is the total code beats, but it is computationally expensive for large instances.
Instead, we optimize a cost function that captures the dominant geometric effects while remaining computationally tractable.
The objective function consists of two types of geometric costs.
We define $d_{\mathrm{data}}(x_i,x_j)$ as the Manhattan distance in the qubit plane between two data qubit patches $x_i$ and $x_j$.
We also define $d_{\mathrm{MSF}}(x_i,X_{\mathrm{MSF}})$ as the minimum Manhattan distance from a data-qubit patch $x_i$ to any MSF patch in $X_{\mathrm{MSF}}$.
Since the matrix $A$ encodes the frequency of interactions between qubits, the total path length required to compile each instruction into lattice-surgery operations can be approximately estimated by the following costs:
\begin{equation}
    \begin{dcases}
        A_{i,j}\, d_{\mathrm{data}}(x_i, x_j),
         & (\text{data--data cost}) \\
        c_{\mathrm{MSF}}\, A_{i,n+1}\, d_{\mathrm{MSF}}(x_i, X_{\mathrm{MSF}}).
         & (\text{data--MSF cost})
    \end{dcases}
\end{equation}
Here, $c_{\mathrm{MSF}} > 0$ is an MSF coefficient that tunes the relative importance of data--MSF instructions.

Using these definitions, we formulate the problem as:
\begin{align}
\underset{
\substack{\{x_i\}_{i=1}^n \subseteq \bar{X},\\X_{\mathrm{MSF}} \subseteq \bar{X}
}}{\mathrm{minimize}} \quad
& \sum_{i=1}^{n} \sum_{j=i+1}^{n} A_{i,j} d_\mathrm{data}(x_i, x_j) \label{prob:mapping_optimization_problem} \\
&{} + c_{\mathrm{MSF}} \sum_{i=1}^{n} A_{i,n+1} d_\mathrm{MSF}(x_i, X_{\mathrm{MSF}}) \notag \\
\mathrm{subject\ to}\quad&\abs{\{x_i\}_{i=1}^n} = n, \quad \abs{X_{\mathrm{MSF}}} = n_{\mathrm{MSF}}, \notag \\
&\{x_i\}_{i=1}^n \cap X_{\mathrm{MSF}} = \emptyset, \notag \\
&X_{\mathrm{MSF}} \subseteq \partial \bar{X} \quad \text{for the outer factory layout.} \notag
\end{align}
The first term of this formulation is essentially equivalent to the quadratic
assignment problem proposed in~\cite{laoMappingLatticeSurgerybased2018}.
We also note that the formulation naturally extends to 2.5D architectures, including multiplanar configurations.
One may also perform optimization based on the area of bounding boxes that involve the operands~\cite{molaviDependencyAwareCompilationSurface2025}.

In addition to our proposed solution, several simple baselines can be considered for Problem~\eqref{prob:mapping_optimization_problem}.
The first is a \textit{naive} mapping, in which $x_i$ and $X_{\mathrm{MSF}}$ are assigned sequentially according to the qubit indices.
Another is a \textit{random} mapping obtained by randomly shuffling this assignment.
These baselines correspond to straightforward implementation choices and are therefore useful for comparison.
Examples of these mappings are shown in \cref{tab:mapping_images}.

\begin{table}[t]
    \centering
    \caption{Comparison of mapping allocation methods for inner and outer factory layouts.}
    \label{tab:mapping_images}
    \begin{tabular}{cccc}
        \toprule
                                                            & \hfil naive \hfil & \hfil random \hfil & \hfil SA \hfil \\
        \midrule
        \rotatebox{90}{\makebox[0.3\columnwidth][c]{outer}} & \includegraphics[width=0.3\columnwidth]{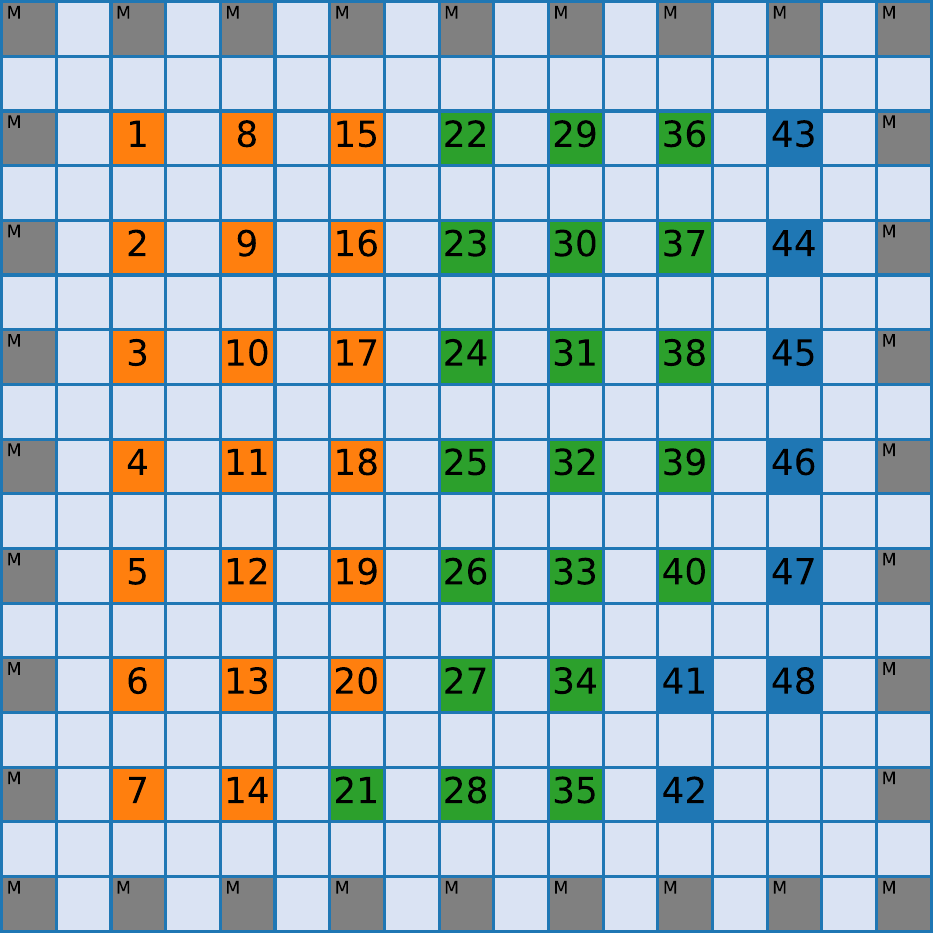} & \includegraphics[width=0.3\columnwidth]{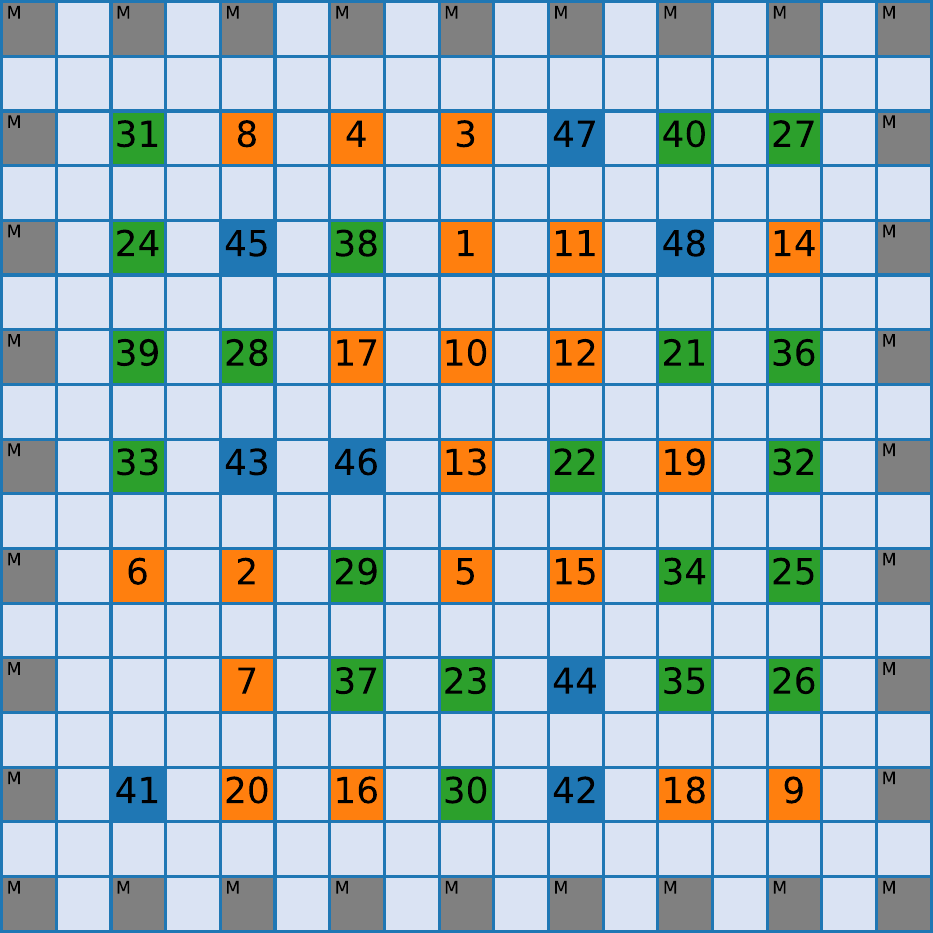} & \includegraphics[width=0.3\columnwidth]{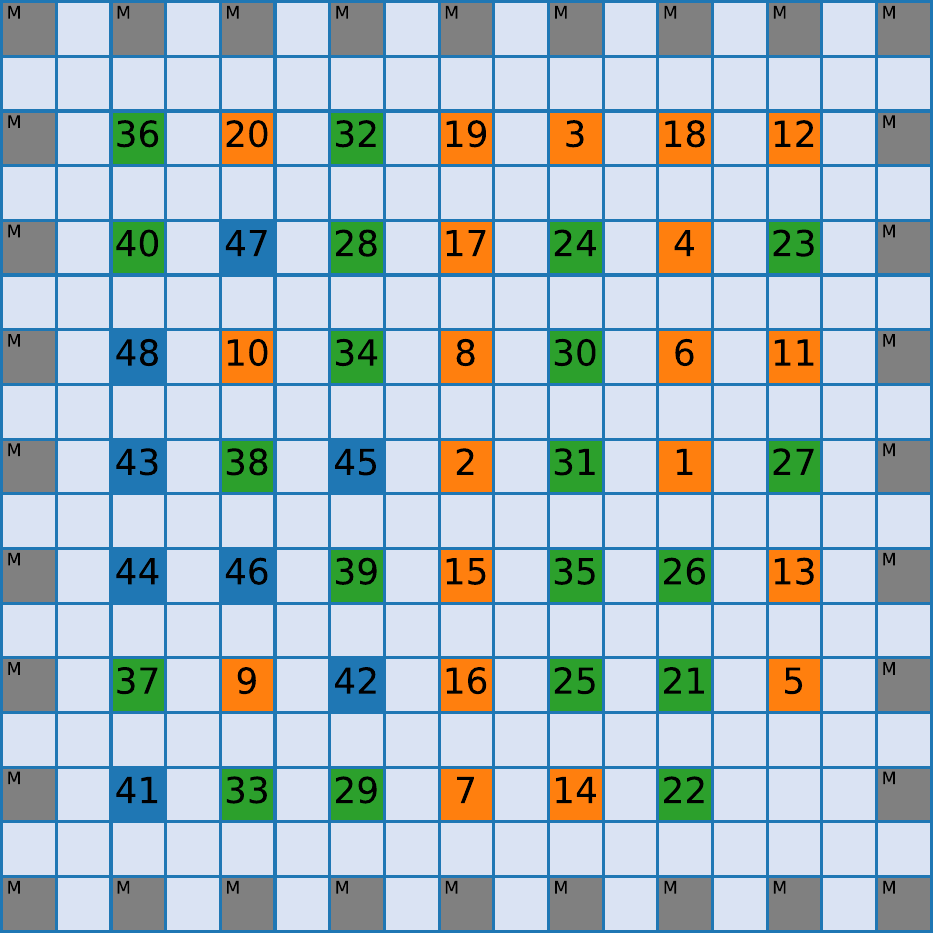} \\
        \rotatebox{90}{\makebox[0.3\columnwidth][c]{inner}} & \includegraphics[width=0.3\columnwidth]{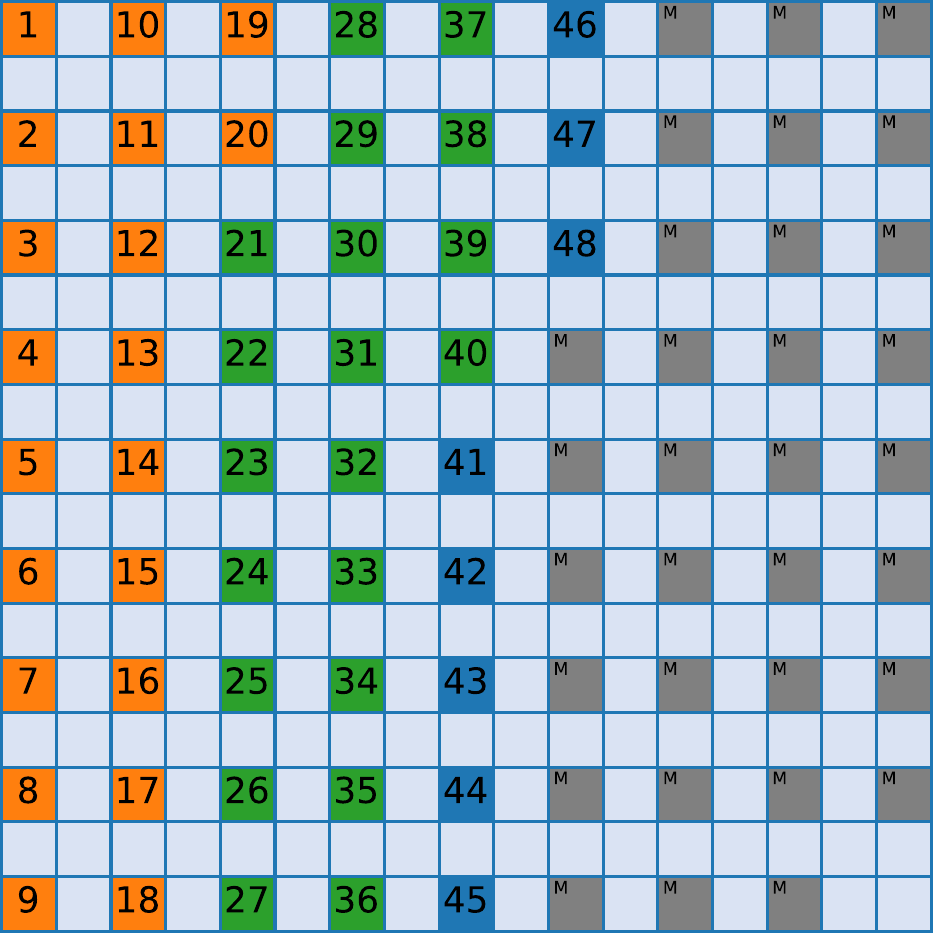} & \includegraphics[width=0.3\columnwidth]{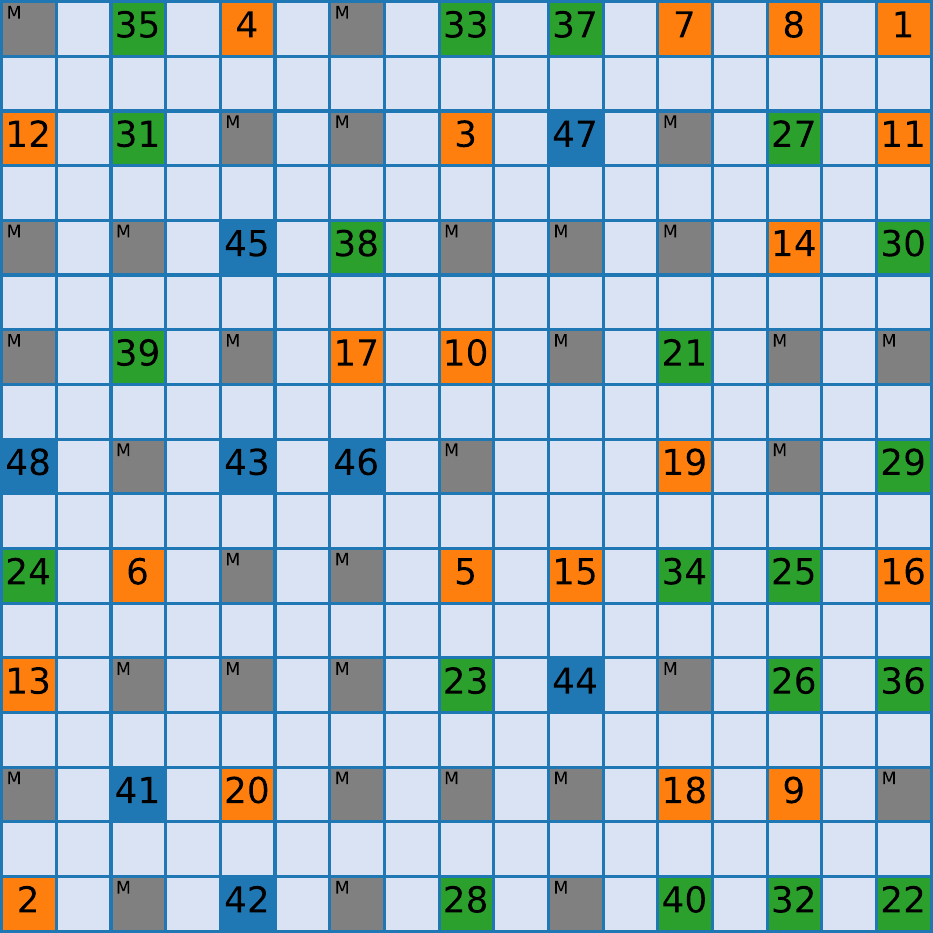} & \includegraphics[width=0.3\columnwidth]{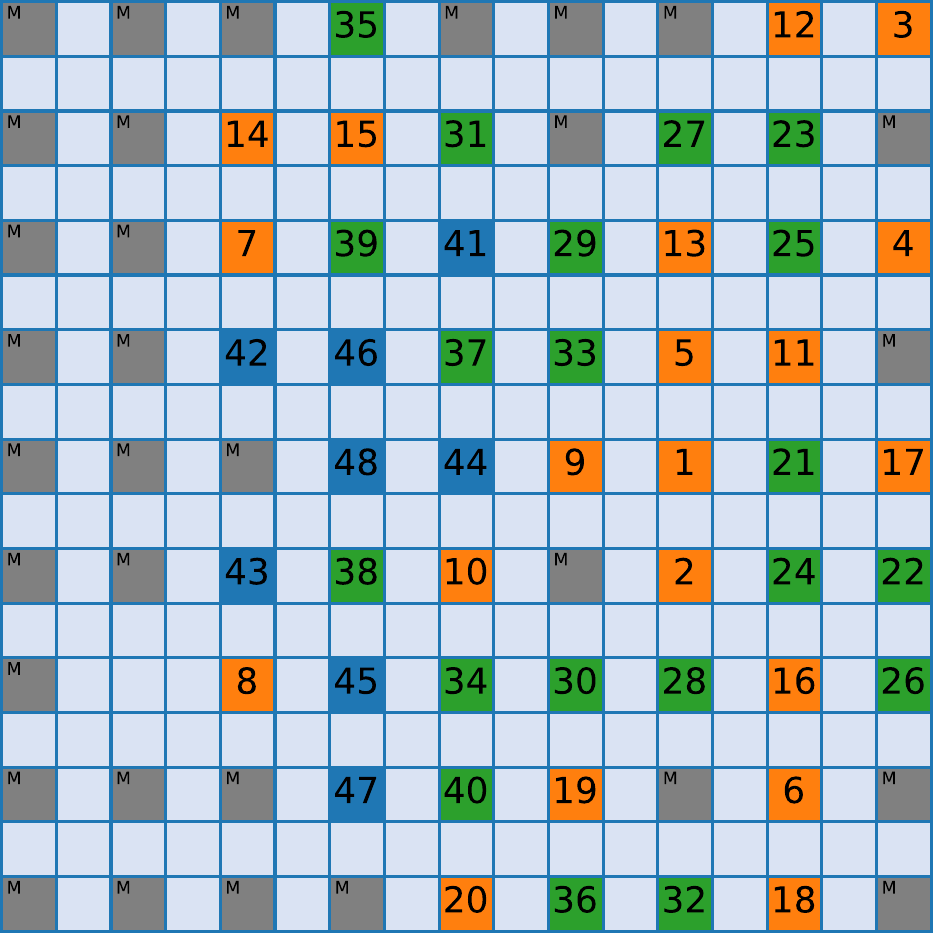} \\
        \bottomrule
    \end{tabular}
\end{table}

\subsection{Solution}

We optimize Problem~\eqref{prob:mapping_optimization_problem} by Simulated Annealing (SA) and obtain a mapping. SA is a classical meta-heuristic optimization method. In this work, we employ SA to obtain an approximate solution.
In particular, the optimization is carried out on a classical computer with computational complexity independent of the number of instructions $K$, which makes the method scalable and applicable to large-scale problems. The details of our implementation of SA are provided in \cref{app:mapping_how_to_solve}.

In our experiments, we tested four parameter settings $c_{\mathrm{MSF}}\in \{1, 0.1, 0.01, 0.001\}$ and selected the configuration that minimized the total execution time. While further tuning of this parameter could potentially lead to better solutions, this comes at the cost of additional computational effort on a classical computer, resulting in a practical trade-off.

\subsection{Experimental Results}

\begin{figure}[t]
    \centering
    \includegraphics[width=\columnwidth]{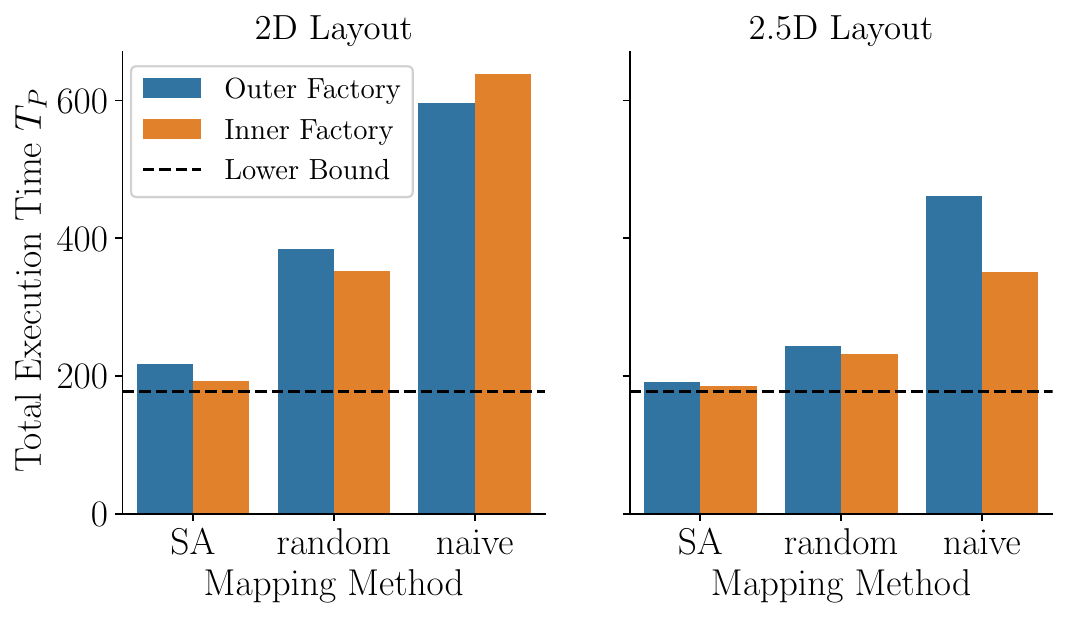}
    \caption{Total execution time $T_P$ for the SELECT-6 circuit simulating the 2D Heisenberg model (size = 10) in both 2D and 2.5D layouts, where $\tau$ is set to $2$. We used double-slice routing for scheduling. The performance follows the order: SA, random, and naive, from best to worst in almost all instances. The SA method achieves performance very close to the attainable lower bound.}
    \label{fig:total_time}
\end{figure}

We present partial experimental results on the mapping.
The key finding is that, when combined with the routing method introduced in the next section (\cref{sec:routing}), SA almost always reduces the total execution time.
As shown in \cref{fig:total_time}, combining SA-based mapping with the scheduling method almost attains the lower bound of total execution time.
Similar results hold for nearly all instruction sequences, as detailed in \cref{app:SA_results}.

The SA-based mapping also reveals the performance gaps between the outer and inner factory layouts, particularly for 2D layouts. To highlight the performance differences, we focus on the optimality gap, defined as the relative difference between the given output value and the optimal value. However, as it is difficult to directly compute the optimal value, we instead utilize a lower bound on the optimal solution, thereby obtaining an upper bound on the optimality gap. 
The lower bound on the optimal value is computed by ignoring magic state resource constraints and spatial congestion in the bus qubit area. In the terminology of \cref{sec:cbpi_stack}, this bound is equivalent to the sum of the base and the operand synchronization hazard.

The optimality gaps for the 2D layouts are at most 23.7\% and 11.3\% for the outer and inner factory layouts, respectively. This shows that the inner factory layout reduces the optimality gap by a factor of $2.1$. For the 2.5D layouts, the gaps are bounded from above by 7.9\% and 6.8\% in the outer and inner factory layouts; even at these near-optimal levels, the inner factory layout yields a further reduction. If the optimal solution is worse than the lower bound, the advantage of the inner factory layout becomes even more significant.

Note that Problem~\eqref{prob:mapping_optimization_problem} only approximates the placement quality and does not directly reflect the actual code beats in lattice surgery. In some cases, a smaller objective value may even yield worse performance after path computation. Nonetheless, SA solutions yield superior results in the vast majority of cases, and thus we consistently use SA-derived results in the main numerical experiments of \cref{sec:numerical_experiments}. The optimization of mapping is applicable to all architectures in this paper.

\section{Routing Optimization}
\label{sec:routing}

We next explain the routing problem.
We first summarize the spacetime routing technique~\cite{hamadaEfficientHighperformanceRouting2024} and describe its extension to 2.5D layouts in \cref{subsec:overview_3drouting}. We then point out that their main algorithm, which we call \textit{projective routing}, is generally incompatible with inner factory layouts in \cref{subsec:space_time_routing_limitation}, demonstrating the need for an efficient routing approach using a constant number of time slices. Finally, we detail the existing approaches and propose a novel routing method, \textit{double-slice routing}, in \cref{subsec:routing_methods}.

Our contribution to this routing optimization problem is twofold. First, we generalize the existing spacetime routing technique to accommodate 2.5D architectures (see \cref{thm:spacetime_25d}), thereby fully leveraging their architectural flexibility.
Second, we propose double-slice routing as a robust alternative to projective routing. By addressing the limitations of projective routing regarding measurement-based feedback, this approach facilitates seamless integration with inner factory layouts and enhances overall reliability through reduced circuit volume, while incurring only a negligible increase in execution time.

\subsection{Overview of Spacetime Routing} \label{subsec:overview_3drouting}

When every multi-body operation is decomposed into a sequence of two-body operations, the lattice-surgery scheduling problem can be simply framed as a 2D path packing problem on the qubit plane. A more sophisticated approach was proposed by \citet{beverlandSurfaceCodeCompilation2022}, who formulated the problem as vertex-disjoint path packing, which is then reduced to an edge-disjoint path packing problem. This results in an efficient scheduling algorithm with theoretical performance guarantees.

Following this approach, Ref.~\cite{hamadaEfficientHighperformanceRouting2024} provided an even more flexible routing scheme that leverages the temporal degrees of freedom. Concretely, routing can be formulated as a 3D path packing problem consisting of two spatial dimensions and one temporal dimension, subject to the following three conditions:
\begin{enumerate}[noitemsep]
    \item \emph{Path Condition}: The operand data patches must be connected by a 3D path through the bus patches.

    \item \emph{Boundary Condition}: The path must connect the operand data patches along the appropriate boundary types, as shown in \cref{tab:endpoint-boundary-types}.

    \item \emph{Kink Condition (informal)}: The parity of the number of \textit{kinks} on the path must be consistent with the operation type.
\end{enumerate}

\begin{table}[t]
    \centering
    \caption{Boundary-type requirements for each lattice-surgery operation}
    \label{tab:endpoint-boundary-types}
    \setlength{\tabcolsep}{5pt}
    \begin{tabular}{@{\,}lcc@{\,}}
        \toprule
        Operation       & Control boundary & Target boundary \\
        \midrule
        $XX$ measurement & $X$                & $X$               \\
        $ZZ$ measurement & $Z$                & $Z$               \\
        Horizontal move & $X$                & $X$               \\
        Vertical move   & $Z$                & $Z$               \\
        CNOT            & $Z$                & $X$               \\
        \bottomrule
    \end{tabular}
\end{table}

The first condition indicates that we allow only two data and/or factory patches for the joint logical measurement. The second condition is a desideratum due to the property of the surface code that the boundaries correspond to different logical operators. Note that these two conditions are also common when we formalize it as a 2D path packing problem.

Unique to the 3D formulation of the problem is the third condition. Here, a \textit{kink} is defined as a temporal segment that connects different boundary types at its ends (see \cref{fig:kink_table}). We assign either $XX$ or $ZZ$ measurements to each spatial segment. Then, a temporal segment can switch the measurement type if and only if it is a kink. Therefore, the boundary condition fixes the type of the spatial segment at the path endpoints, and the kink condition ensures that the measurement types of all the spatial segments can be determined consistently. Accordingly, the parity of kink numbers must be as follows:
\begin{enumerate}[noitemsep,start=3]
    \item \emph{Kink Condition}: An even number of kinks is required for $XX$ and $ZZ$ measurements and logical qubit movement, while an odd number is required for CNOT operations.
\end{enumerate}

\begin{figure}
    \centering
    \includegraphics[width=\linewidth]{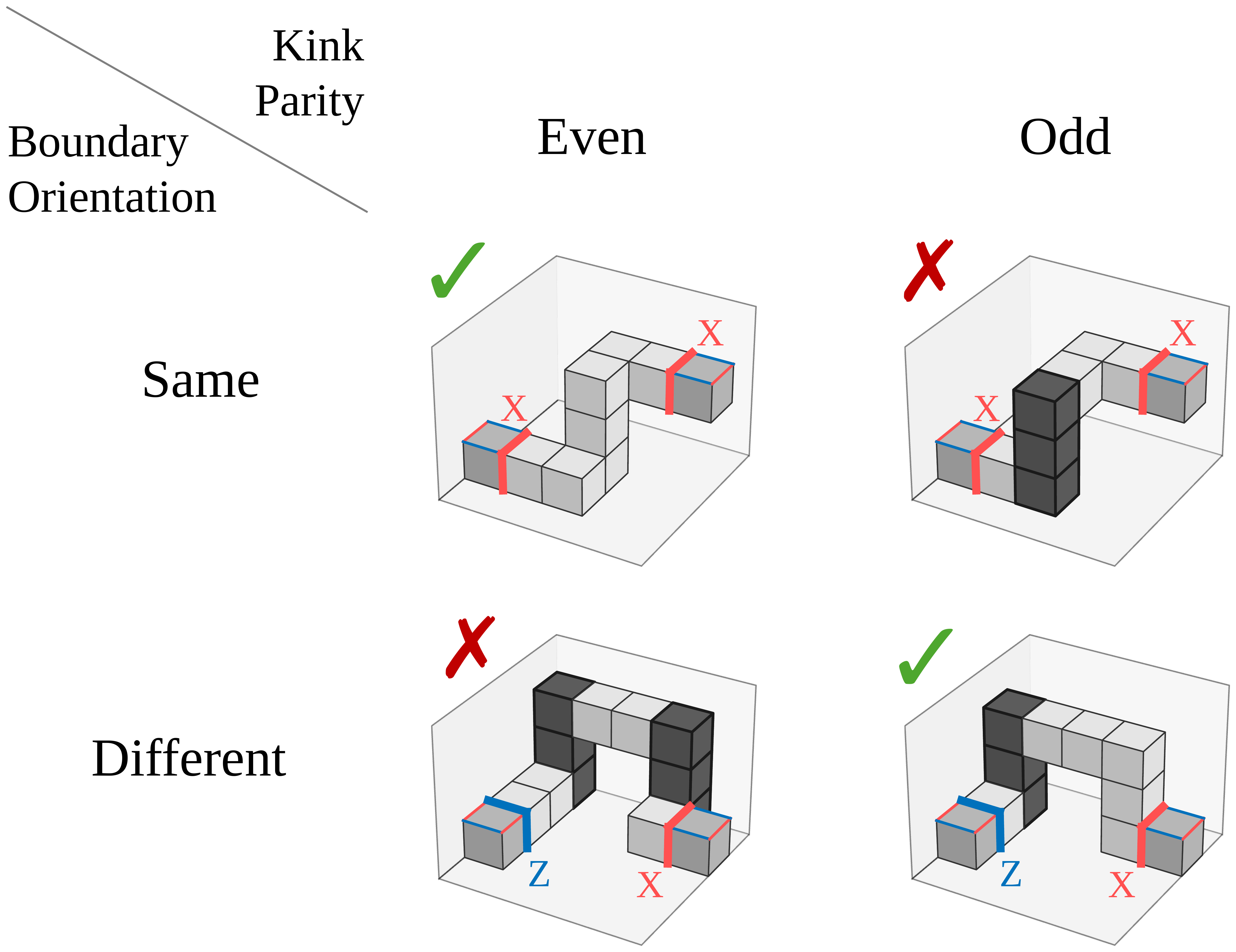}
    \caption{Valid and invalid 3D paths regarding the kink condition. Kinks are represented by dark voxels.}
    \label{fig:kink_table}
\end{figure}

Conversely, given a 3D path satisfying these three conditions, it can be verified that the measurements assigned to the spatial segments collectively constitute the desired action (see \cref{subsubsec:circuit_equivalence,app:spacetime_2d_simplification}; see also \cite[Section 3.3.2]{hamadaEfficientHighperformanceRouting2024}). Therefore, the conventional lattice-surgery routing problem can be generalized to the spacetime routing problem subject to the aforementioned three conditions.

In this paper, we further extend this spacetime routing framework~\cite[Theorem 1]{hamadaEfficientHighperformanceRouting2024} to 2.5D architectures. With only a modification to the definition of kinks, the same three conditions guarantee that a ``3.5D'' spacetime path can be explored to achieve the intended action. We formally state the extension below and defer its proof to \cref{subsubsec:spacetime_routing_extension_proof}.
\begin{restatable}{theorem}{spacetime}
    \label{thm:spacetime_25d}
    Suppose that the underlying 2.5D architecture supports either (i) inter-layer lattice surgery or (ii) transversal CNOT gates as inter-layer operations. Let the inter-layer movement of a path be treated analogously to a spatial segment for case (i) and a temporal segment for case (ii). Under this correspondence, kinks can be defined for a 3.5D spacetime path. Consequently, for any operation listed in \cref{tab:endpoint-boundary-types}, the path can achieve the intended operation if it satisfies the same three conditions.
\end{restatable}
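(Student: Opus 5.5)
We reduce the statement to the 2D spacetime routing theorem of \citet[Theorem 1]{hamadaEfficientHighperformanceRouting2024}. That result says a 3D path satisfying the path, boundary and kink conditions implements the intended operation of \cref{tab:endpoint-boundary-types}. Its proof is local: the path is cut into maximal spatial segments, each realizing a multi-patch $XX$ or $ZZ$ lattice-surgery measurement. It is also cut into temporal segments, each of which is either an idle (identity-preserving) patch or a kink. A kink is a patch whose boundary orientation is switched between slices, acting as a Hadamard-type frame change that swaps the measurement type. The correctness then follows by composing these pieces and tracking Pauli byproducts.

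Our plan is to show that an inter-layer step can be assigned to one of these two classes with the same algebraic action, so the composition argument carries over verbatim.

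\emph{Case (i), inter-layer lattice surgery.} An inter-layer step joins a bus patch on one layer to a bus patch on the adjacent layer by merging along a boundary. We will check that this merge acts as a joint $XX$ or $ZZ$ measurement on the shared boundary, exactly like an in-plane spatial adjacency. To do so, we write the stabilizers of the merged region and observe that the product of boundary operators of the two patches is measured. Hence the inter-layer step is a spatial segment. The boundary-type consistency along the step then becomes a boundary-matching requirement, and kinks remain defined only on true temporal segments. The path condition and boundary condition are unchanged, so the 2D proof applies.

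\emph{Case (ii), transversal CNOT.} Here the two layers are coupled by a transversal logical CNOT rather than a merge. A path that moves from layer $\ell$ to layer $\ell\pm1$ passes logical information through this gate, analogously to the teleportation-like identity carried by a temporal segment. We will verify this via the standard stabilizer propagation of a transversal CNOT:
\begin{align*}
X_c &\mapsto X_cX_t, & Z_t &\mapsto Z_cZ_t.
\end{align*}
As a result, an $X$-type correlation on the control patch continues as an $X$-type correlation, and likewise for $Z$ on the target. This is the action of a temporal segment. A temporal segment whose endpoints have different boundary types, meaning the type flips when passing between layers with differing orientation, is then declared a kink, in analogy with the usual definition. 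With this definition, every segment of the 3.5D path has one of the two algebraic actions used in the 2D theorem. The kink-parity bookkeeping, which requires even parity for measurements and moves and odd parity for CNOT, follows from the same counting of type switches between the fixed endpoint boundary types.

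The main obstacle is case (ii). We must check that the transversal CNOT really induces only a basis-preserving or basis-swapping correlation, with no unwanted entangling byproduct on the other layer's data. The delicate point is that the other layer's patches at that voxel are bus patches in a known initialized state. We would first try to show that if the partner patch is a fresh bus patch prepared in $\ket{+}$ or $\ket{0}$, as appropriate, then the CNOT acts on it as a move or identity up to Pauli frame corrections. Those corrections are classically tracked, exactly as the measurement outcomes are in the 2D proof. This would complete the reduction to \citet[Theorem 1]{hamadaEfficientHighperformanceRouting2024}.
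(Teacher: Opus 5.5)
Your skeleton matches the paper's proof. You split the path into spatial and temporal segments, fold inter-layer lattice surgery into spatial segments in case (i), and fold the inter-layer pair $(u,v)$ into a temporal segment in case (ii). You then assign $XX$/$ZZ$ types to spatial segments, switching at kinks, and reduce via \cref{lem:simplify_chain}.

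The real difference is how you handle the transversal CNOT. The paper keeps it as a link of the chain. By \cref{rem:cnot_chain_simplification} the CNOT is a $ZZ$/$XX$ measurement pair through an extra ancilla (control basis $Z$, target basis $X$), so the chain lemma applies unchanged. The CNOT direction is then free, and because the lemma allows any ordering, the CNOT can sit anywhere in its code beat.

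You instead fuse $u$ and $v$ into one virtual bus qubit by one-bit teleportation: either $v$ in $\ket{0}$, CNOT $u\to v$, measure $X_u$; or $v$ in $\ket{+}$, CNOT $v\to u$, measure $Z_u$. So your ``main obstacle'' is standard, and you can quote the 2D result without generalizing the lemma. The price is an ordering constraint: the teleportation must come after $u$'s merge and before $v$'s. That fails when both merges fall in the same code beat, i.e.\ when the temporal segment is just $(u,t),(v,t)$. In that case you should apply the CNOT first to fresh patches ($u$ in $\ket{+}$, $v$ in $\ket{0}$) to form a Bell pair, which again acts as a single bus qubit.

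However, two of your statements about kinks are wrong and must be fixed. The second concerns exactly the step the paper's case (ii) hinges on.

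First, a kink is not an orientation switch or a Hadamard-type frame change. The paper fixes each patch's orientation once initialized. A kink is a temporal segment whose two attachments use boundaries of different type: a bus ancilla merged through an $X$ boundary on one side and a $Z$ boundary on the other. By \cref{lem:simplify_chain}, this is what yields CNOT action for odd parity. A genuine Hadamard on the ancilla would instead make the chain measure $Z_cX_t$.

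Second, in case (ii) the type cannot flip ``between layers with differing orientation'', because a transversal CNOT is a logical CNOT only between identically oriented patches. The paper's necessity argument rests on precisely this fact. Since $u$ and $v$ share an orientation, consider the spatial segment attached to $u$ through boundary $b_1$ and the one attached to $v$ through $b_2$. Their measurement types differ iff $b_1\neq b_2$, i.e.\ iff the inter-layer temporal segment is a kink. Inter-layer kinks therefore come from the in-plane attachment sides, as in 2D. Your formal definition (``endpoints of different boundary types'') is the right one, but your gloss points to the wrong mechanism, and your type-switch counting is valid only once this same-orientation fact is in place. With these corrections, your reduction goes through.
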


\begin{figure*}[tb]
    \centering
    \includegraphics[width=\linewidth]{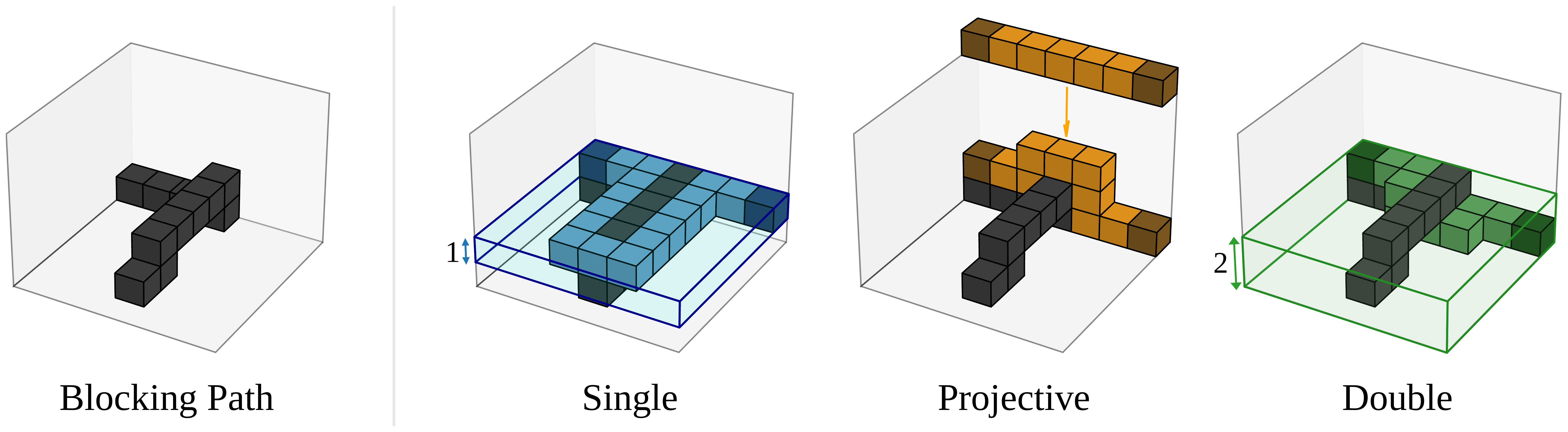}
    \caption{Spacetime visualization of three algorithms' example outputs.
    }
    \label{fig:routing_algo}
\end{figure*}

\subsection{Feedback Bottlenecks in Spacetime Routing}
\label{subsec:space_time_routing_limitation}
A critical limitation of spacetime routing emerges when integrating inner factory layouts with operations requiring measurement-based feedback, such as $T$-gate teleportation.

The standard $T$-gate protocol requires an $S$-gate correction based on a lattice-surgery measurement outcome. Since the measurement outcome of a lattice-surgery spacetime path cannot be finalized until all the measurement results of the spatial path segments are known, this implementation causes a severe temporal latency issue if the spacetime path spans numerous time slices. Ref.~\cite{hamadaEfficientHighperformanceRouting2024} argues that this latency can be mitigated by the use of auto-corrected $T$-gate teleportation~\cite{litinskiGameSurfaceCodes2019,Fowler2012TimeOptimal}, which converts the $S$-gate feedback into the choice of a Pauli measurement basis on a newly produced ancillary state adjacent to the original magic state location. This effectively translates the temporal latency issue into a spatial management problem: the ancillary logical state must be stored until the corresponding measurement outcome is determined.

The severity of this spatial bottleneck is fundamentally tied to the layout and routing framework. Outer factory layouts can readily remove these unresolved patches from the main computation area since they are produced at the perimeter of the qubit plane. In contrast, in inner factory layouts, these unresolved patches are generated within the primary computing region. Although one could move an unresolved patch outside the primary computation area using lattice surgery, this incurs essentially the same cost as moving a magic state from outer factories to the data qubit, thereby defeating the purpose of inner factory layouts.

Therefore, the extended path duration creates a severe blocking issue in inner factory layouts. As discussed in \cref{subsec:routing_methods}, the method proposed in Ref.~\cite{hamadaEfficientHighperformanceRouting2024} generates paths spanning an arbitrary number of time slices. Hence, the resulting unresolved patches act as long-term obstructions that congest the main computation area, rendering the method incompatible with such architectures. Given the promise of inner factories motivated by the magic state cultivation paradigm, this limitation highlights the urgent need for an efficient routing framework that operates with constant depth. This ensures that the ancillary states are resolved rapidly, keeping the qubit plane clear for subsequent operations.

\subsection{Routing Methods} \label{subsec:routing_methods}

\begin{table*}[t]
    \centering
    \caption{Comparison of routing methods by technical aspects. The column \textit{Kink Parity Correction} indicates the availability of a correction algorithm with guaranteed termination. Specifically, (--) denotes methods requiring two code beats per CNOT to avoid correction, while ($\dagger$) indicates that there exists a heuristic correction technique for 2.5D architectures with transversal CNOTs. The column \textit{Inner Layout Compatibility} refers to susceptibility to long-term obstructions discussed in \cref{subsec:space_time_routing_limitation}.}
    \label{tab:routing_technical_comparison}
    \newcommand{\ok}{\textcolor{green!60!black}{\checkmark}}
    \newcommand{\no}{\textcolor{red!80!black}{\ding{55}}}
    \begin{tabular*}{\textwidth}{@{\extracolsep{\fill}}lllccccc@{}}
        \toprule
        Layout & Method & & \# of Time Slices & Kink Parity Correction & Inner Factory Compatibility \\
        \midrule
        \multirow{3}{*}{\shortstack[l]{2D}} & Single & \cite{hamadaEfficientHighperformanceRouting2024} & 1 & -- & \ok \\
        & Projective & \cite{hamadaEfficientHighperformanceRouting2024} & $\infty$ & \ok & \no \\
        & Double & This work & 2 & \ok & \ok \\
        \midrule
        \multirow{3}{*}{\shortstack[l]{2.5D}} & Single & This work & 1 & --$\mathrlap{^\dagger}$ & \ok \\
        & Projective & This work & $\infty$ & \ok & \no \\
        & Double & This work & 2 & \ok & \ok \\
        \bottomrule
    \end{tabular*}
\end{table*}

In this section, we detail three routing algorithms: single-slice routing, projective routing, and double-slice routing. See \cref{fig:routing_algo} for their illustration, and \cref{tab:routing_technical_comparison} for their comparison from a technical aspect.

The first two methods were previously proposed in Ref.~\cite{hamadaEfficientHighperformanceRouting2024}. Single-slice routing (originally termed ``look-ahead BFS'', where ``BFS'' stands for Breadth-First Search) is a baseline algorithm without spacetime routing, and projective routing (originally termed ``look-ahead Dijkstra projection'') is a primary method leveraging the spacetime routing technique.

Building upon these approaches, we propose a novel routing method, double-slice routing, which utilizes only two time slices and thereby overcomes the feedback issue of projective routing discussed in \cref{subsec:space_time_routing_limitation}.

For each routing method, we first outline the routing strategy without considering the kink condition and then detail the correction of the kink parity. Further details and their extension to 2.5D architectures (including multiplanar configurations) are provided in \cref{app:routing}.

\subsubsection{Single-slice Routing}

\paragraph{Routing Strategy.}

First, we introduce single-slice routing as a baseline method. This framework does not exploit the spacetime routing technique and keeps each path within a single time slice.

The routing algorithm is a greedy approach with instruction look-ahead. For each time slice, it fetches instructions that have no unexecuted dependent instructions and finds the shortest path connecting their operand patches using BFS. We can efficiently retrieve available instructions by maintaining an instruction dependency graph and applying a technique similar to topological sorting (See \cite[Section 2.4]{hamadaEfficientHighperformanceRouting2024} for details). To optimize scheduling, instructions are prioritized based on their critical path length, which is defined as the maximum length of the dependency chain originating from the instruction.

\paragraph{Kink Parity Correction.}

Next, we provide a sketch of the strategies for addressing kink parity. Within a 2D layout constrained to a single time slice, it is inherently impossible to generate any kinks. This fundamental limitation prevents a direct implementation of CNOT operations via a path within a single time slice.

To accommodate CNOT operations in this framework, we employ the long-range CNOT approach~\cite{beverlandSurfaceCodeCompilation2022}, which effectively converts all corners into kinks while requiring two code beats per operation. Consequently, assuming that the data patches share the same boundary orientation, the kink condition is automatically satisfied by the boundary condition.

For 2.5D architectures supporting transversal CNOT gates, our implementation employs an efficient yet heuristic correction technique (see \cref{subsubsec:single_kink_25d}). Although this technique offers no guarantee of identifying a path that satisfies the kink condition, it performed robustly across all numerical benchmarks presented in this paper.

\subsubsection{Projective Routing}

\paragraph{Routing Strategy.}

Next, we introduce projective routing, a method that utilizes multiple time slices and incorporates projective search of paths~\cite{hamadaEfficientHighperformanceRouting2024}. In this framework, paths can span an arbitrary number of time slices. Since a naive path search in the 3D lattice would be time-consuming, the algorithm constructs 3D paths by stacking 2D paths onto the current scheduling result. To create the intermediate 2D path, the algorithm searches a weighted grid for a minimum-weight path, where patch weights increase as powers of two relative to their current height in the scheduling result.

Because the routing procedure always succeeds in finding a path, a straightforward application of the instruction look-ahead technique, which is used for single-slice routing, is ineffective. Instead, the algorithm employs an alternative look-ahead strategy as an optimization: it repeatedly fetches an instruction that minimizes the maximum height of the operand patches.

As later numerical experiments show, projective routing outputs the smallest execution time among the three methods discussed in this section, but its circuit volume is relatively large. We expect this method to be beneficial for long-term FTQC architectures, where minimizing the execution time is critical.

\paragraph{Kink Parity Correction.}

Since 3D paths are obtained by projection, the easiest adjustment to these paths would be to lift the path voxels. We implement a kink correction technique presented in Ref.~\cite{hamadaEfficientHighperformanceRouting2024}, which identifies the first corner of a path and locally lifts voxels to flip the kink parity. This technique is guaranteed to correctly adjust the parity, provided that the data qubits share the same boundary orientation. In this work, we extended this correction technique to 2.5D architectures. See \cref{subsubsec:proj_kink_25d} for further details.

\subsubsection{Double-Slice Routing}

\paragraph{Routing Strategy.}

Finally, we propose a new method called double-slice routing. Similar to the single-slice routing, this algorithm is also a greedy approach with instruction look-ahead. The major difference from the single-slice routing is that this framework utilizes 3D routing, specifically focusing on paths that span two time slices. The algorithm manages the two most recent slices and finds the shortest path within them. The use of two time slices is effective because it prevents routing collisions while maintaining short lattice surgery paths, combining the advantages of both single-slice and projective routing. Furthermore, it enables CNOT routing in a single path search and facilitates the implementation of a kink correction subroutine.

As our later numerical experiments show, although double-slice routing results in a slightly longer execution time than projective routing, it achieves shorter path lengths and a smaller circuit volume. We believe this characteristic makes the method particularly valuable for the early FTQC architectures, where error suppression and decoding efficiency are critical. 

\paragraph{Kink Parity Correction.}

As a kink parity correction method tailored to this framework, we propose a technique that ``pinches'' two spatially adjacent voxels in the time direction (see \cref{fig:double_kink_correction}). To correct kink parity, the algorithm exhaustively examines every pair of spatially adjacent voxels, shifting them temporally within the two time slices and updating the path accordingly. Notably, this approach requires only two free voxels adjacent to the path, which is the optimal constant upper bound for spatial overhead. This is justified by the fact that even a 2D path with a single corner demands at least two free additional voxels to form a kink.

\begin{figure*}[tb]
    \begin{subfigure}
        [t]{0.3\linewidth}
        \centering
        \includegraphics[width=\columnwidth]{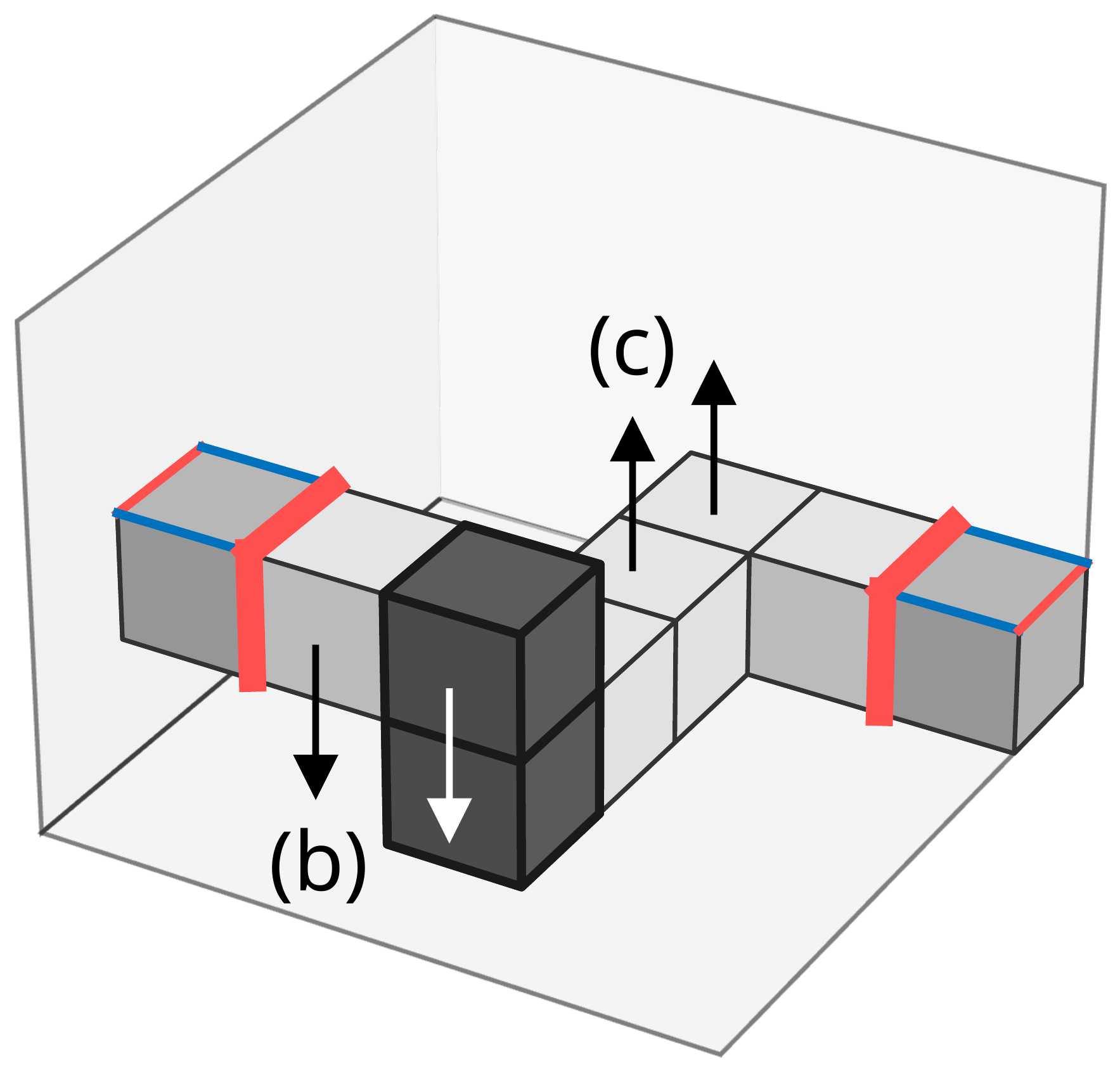}
        \subcaption{}
    \end{subfigure}
    \hfill
    \begin{subfigure}
        [t]{0.3\linewidth}
        \centering
        \includegraphics[width=\columnwidth]{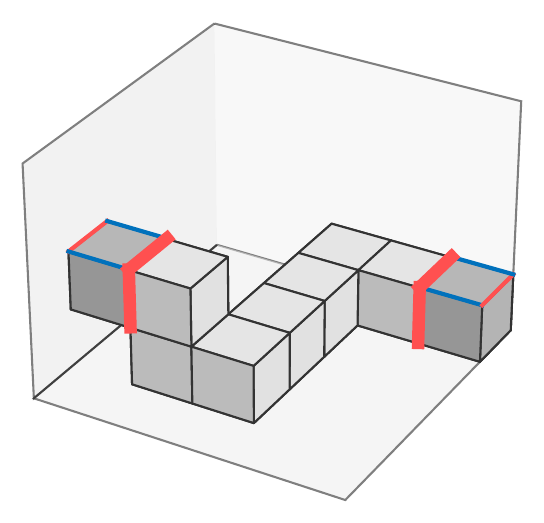}
        \subcaption{}
    \end{subfigure}
    \hfill
    \begin{subfigure}
        [t]{0.3\linewidth}
        \centering
        \includegraphics[width=\columnwidth]{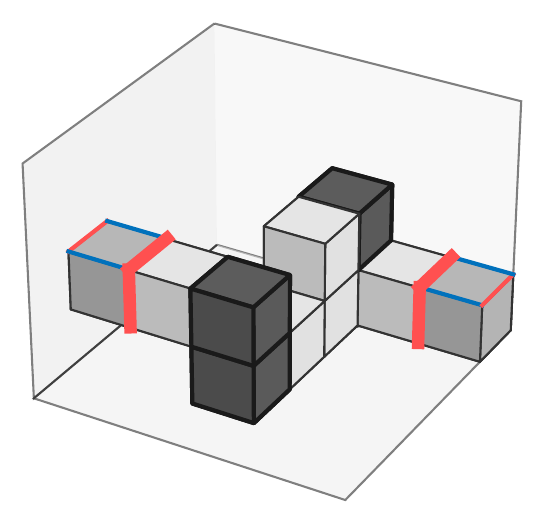}
        \subcaption{}
    \end{subfigure}
    \caption{Illustration of the kink parity correction technique for double-slice routing. Kinks are represented by dark pillars of voxels. (a) A path spanning two time slices before kink correction. (b, c) Corrected paths after temporally pinching two spatially adjacent voxels of the path, which now satisfy the kink condition.}
    \label{fig:double_kink_correction}
\end{figure*}

A potential drawback of this adjustment is its susceptibility to failure when temporally adjacent voxels are occupied. If the algorithm fails to correct the parity, it abandons the current path and proceeds to the next executable instruction, which is exactly the same behavior as when it fails to find a path between the operands. Nevertheless, provided that all data qubits share a common boundary orientation, the entire algorithm is guaranteed to terminate successfully. To this end, we prove in \cref{subsec:termination_guarantee} that our adjustment procedure successfully corrects the kink parity of a spacetime path under three assumptions: (i) the two target qubits have the same boundary orientation, (ii) none of the temporally adjacent voxels in the two focused time slices is blocked by other paths, and (iii) the path before correction is the shortest path. While the first and third conditions are always satisfied by our assumption on boundary orientations and the design of the path search algorithm, respectively, not all routing attempts initially satisfy the second condition. However, if the algorithm repeatedly fails to find a path or correct the kink parity for a given instruction, all preceding instructions will eventually be resolved; at this point, the absence of other blocking paths ensures the second condition and allows the routing path to be successfully corrected.

\section{Numerical Experiments}
\label{sec:numerical_experiments}

\subsection{CBPI Stack}
\label{sec:cbpi_stack}

To evaluate the performance of our compilation technique, we use the metric called Code Beats Per Instruction (CBPI)~\cite{uenoHighPerformanceScalableFaultTolerant2024}.
By dividing the total execution time by the number of instructions, the CBPI quantifies the instruction-level parallelism and the efficiency of the lattice surgery operations independently of the length of the input program.

Furthermore, we utilize the CBPI stack~\cite{uenoHighPerformanceScalableFaultTolerant2024} to analyze factors that degrade performance, called \textit{hazards}. This study focuses on four hazards: operand synchronization, CX path congestion, magic path congestion, and kink parity correction. We first establish a hazard-free baseline performance and then sequentially introduce these constraints to attribute the resulting performance loss to each specific hazard. In this paper, we apply this decomposition technique to both total execution time and total circuit volume.

The following outlines the computation for the base and each hazard (see \cref{sec:cbpi_detail} for further details).
\begin{itemize}
    \item \textit{Base}: This represents the theoretical lower bound of execution time, assuming that the instructions exclusively require one code beat on their operand qubits. Hence, this ideal execution time equals the maximum number of instructions applied to any single qubit.
    \item \textit{Operand Synchronization}: This hazard arises from the timing constraints inherent to the routing framework, such as single- or double-slice routing. This component, combined with the base, represents the performance limit for a given framework.
    \item \textit{CX Path Congestion}: This component accounts for routing path congestion caused by all \texttt{CX} instructions.
    \item \textit{Magic Path Congestion}: This hazard reflects the finite throughput of the MSF patches and spatial congestion of their incident routing paths.
    \item \textit{Kink Parity Correction}: This represents the final overhead required to modify spacetime paths to satisfy the kink condition.
\end{itemize}

\subsection{Results}

\subsubsection{Routing Comparison}

\begin{figure*}
    \centering
    \includegraphics[width=1\linewidth]{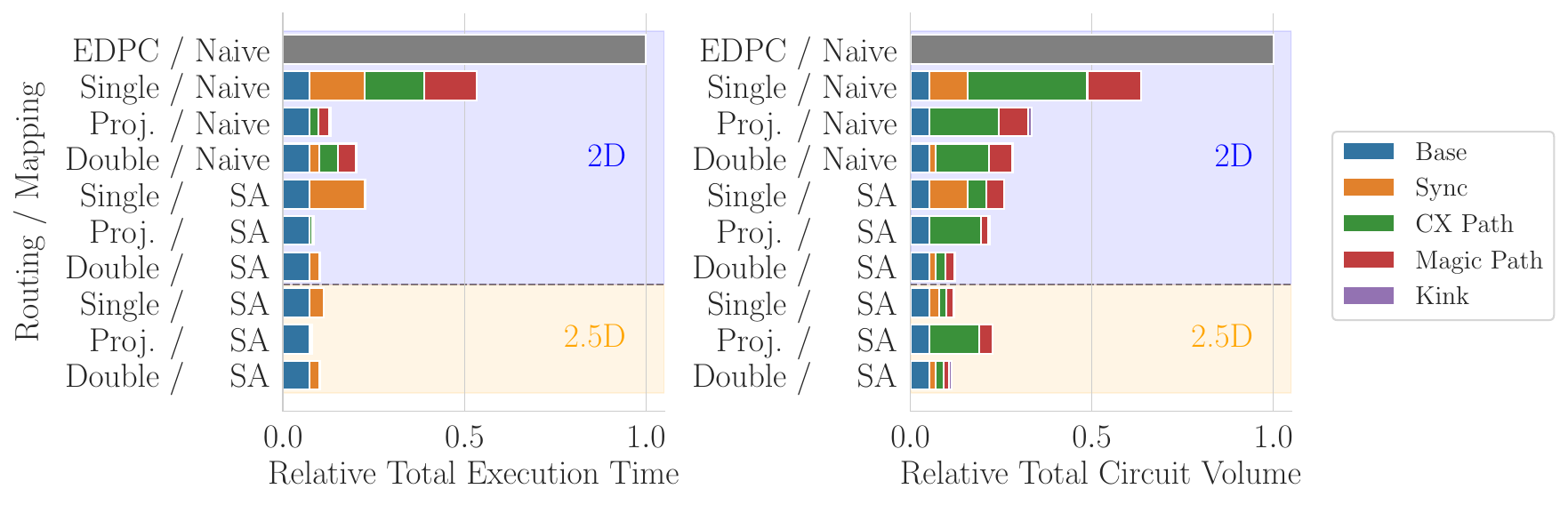}
    \caption{Benchmarks of the three routing methods for the SELECT-5 circuit with an outer architecture, evaluated on the two metrics. All the values are normalized relative to the EDPC baseline for better clarity. }
    \label{fig:routing_comparison}
\end{figure*}

This section presents several numerical experimental results with a focus on the performance trends of the three routing methods. We also included the Edge-Disjoint Paths Compilation (EDPC) algorithm by \citet{beverlandSurfaceCodeCompilation2022} as an existing method for large-scale compilation.

\begin{table*}[t]
    \centering
    \caption{Performance comparison across different allocation and routing methods. The values represent the improvement ratios between the specified configurations. Routing methods are denoted by their initial letters: S, D, and P stand for single-slice, double-slice, and projective routing, respectively.}
    \label{tab:performance_comparison}
    \renewcommand{~}{\hphantom{0}}
    \begin{tabular*}{\textwidth}{@{\extracolsep{\fill}}llcccc@{}}
        \toprule
        \textbf{Metric} & \textbf{Benchmark} & \textbf{SA Gain} & \textbf{Routing Gain} & \textbf{Total Gain} & \textbf{Ratio to Proj.} \\
        & & (Naive-S / SA-S) & (SA-S / SA-D) & (Naive-S / SA-D) & (SA-P / SA-D) \\
        \midrule
        \multirow{3}{*}{\shortstack[l]{Execution Time}}
        & SELECT-4 & 1.9 & 2.2 & 4.1 & ~0.85 \\
        & SELECT-5 & 2.4 & 2.2 & 5.2 & ~0.83 \\
        & SELECT-6 & 3.2 & 2.4 & 7.5 & ~0.77 \\
        \midrule
        \multirow{3}{*}{\shortstack[l]{Circuit Volume}}
        & SELECT-4 & 2.3 & 2.0 & 4.7 & 2.2 \\
        & SELECT-5 & 2.5 & 2.1 & 5.1 & 1.8 \\
        & SELECT-6 & 3.0 & 2.1 & 6.2 & 1.3 \\
        \bottomrule
    \end{tabular*}
\end{table*}

\Cref{fig:routing_comparison} summarizes the performance trend of the routing algorithms benchmarked on the SELECT-5 circuit. To align with the experimental settings of the EDPC algorithm, we adopted the 2D outer factory layout and set $\tau$ to $0$. Since the EDPC algorithm uses the naive allocation, we included both the naive allocation and the SA allocation. All the values are normalized relative to the EDPC baseline for better visibility. 

In this comparison, the baseline algorithm already outperforms the EDPC algorithm. This advantage is due to a fundamental difference in managing instruction dependency. The EDPC algorithm partitions the instruction dependency graph into sequential layers via a topological sort and then processes them in order. In contrast, the instruction look-ahead technique employed in single-slice and double-slice routing dynamically identifies and attempts to route all executable instructions from the entire graph at each step, offering greater flexibility.

This distinction directly impacts performance. With EDPC, a single graph layer often splits into multiple time slices due to routing conflicts, which frequently results in sparse, underutilized slices. These findings suggest that this scheduling inefficiency, rather than a difference in the routing logic itself, is the primary source of the performance gap.

We note that the layered approach by EDPC is well-suited for handling a wider set of operations, such as Hadamard gates, which our method currently ignores. It also offers a clear advantage for parallel execution by partitioning the entire problem across graph layers.

We now compare the three routing methods described in \cref{sec:routing}. \Cref{tab:performance_comparison} shows the improvement ratios regarding these methods on the parallelized SELECT circuits.
For most of the tested circuits, projective routing yields the fewest code beats, followed by double-slice routing and single-slice routing in this order.
This is a natural consequence of path flexibility in each framework. Regarding the circuit volumes, double-slice routing outputs the smallest volume.

On the other hand, double-slice routing and projective routing output worse scheduling results than single-slice routing in a few cases. The performance degradation in double-slice routing can be attributed to a non-negligible kink correction overhead. Projective routing has the problem of blocking the qubit plane by temporal segments when the plane is small, resulting in significant delays for subsequent routing.

\subsubsection{Distribution of Path Volumes}

To investigate the behavior of the three routing algorithms discussed in \cref{subsec:routing_methods}, we analyze the distribution of \textit{path volumes} in the scheduling results of each method. Here, a path volume of $P$ refers to the number of spacetime pairs $(x, t) \in P \subseteq X \times \mathbb{N}$ contained in the 3D path.

\begin{figure}[tb]
    \centering
    \includegraphics[width=\columnwidth]{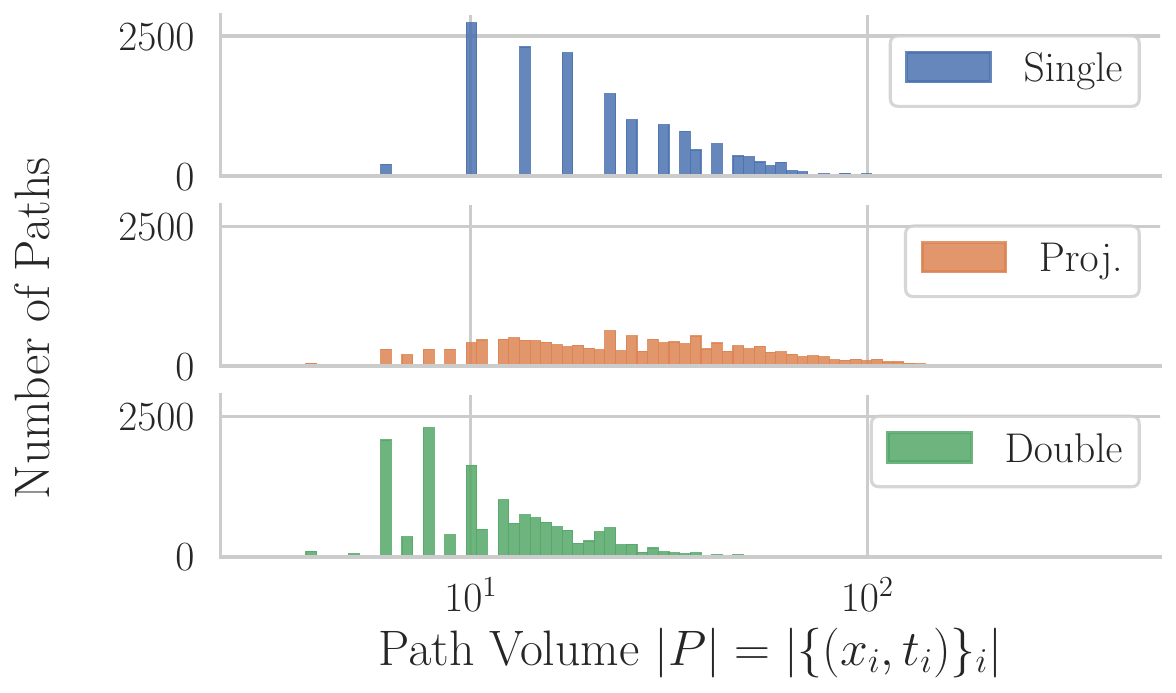}
    \caption{Histograms of path volumes for the three routing algorithms evaluated on the SELECT-5 circuit simulating the 2D Heisenberg model. The x-axis is shown on a logarithmic scale.}
    \label{fig:path_length_histogram}
\end{figure}

Following the benchmark configuration from the previous section, we evaluate the routing methods on the 2D outer factory layout optimized with the SA-based mapping technique, while setting $\tau$ to $0$ and using the SELECT-5 circuit simulating the 2D Heisenberg model.
\Cref{fig:path_length_histogram} shows the histograms for the three scheduling results.
The distribution for double-slice routing is shifted to the left compared to the other two, indicating that the method reliably yields short paths.

\begin{figure}[tb]
    \centering
    \includegraphics[width=\columnwidth]{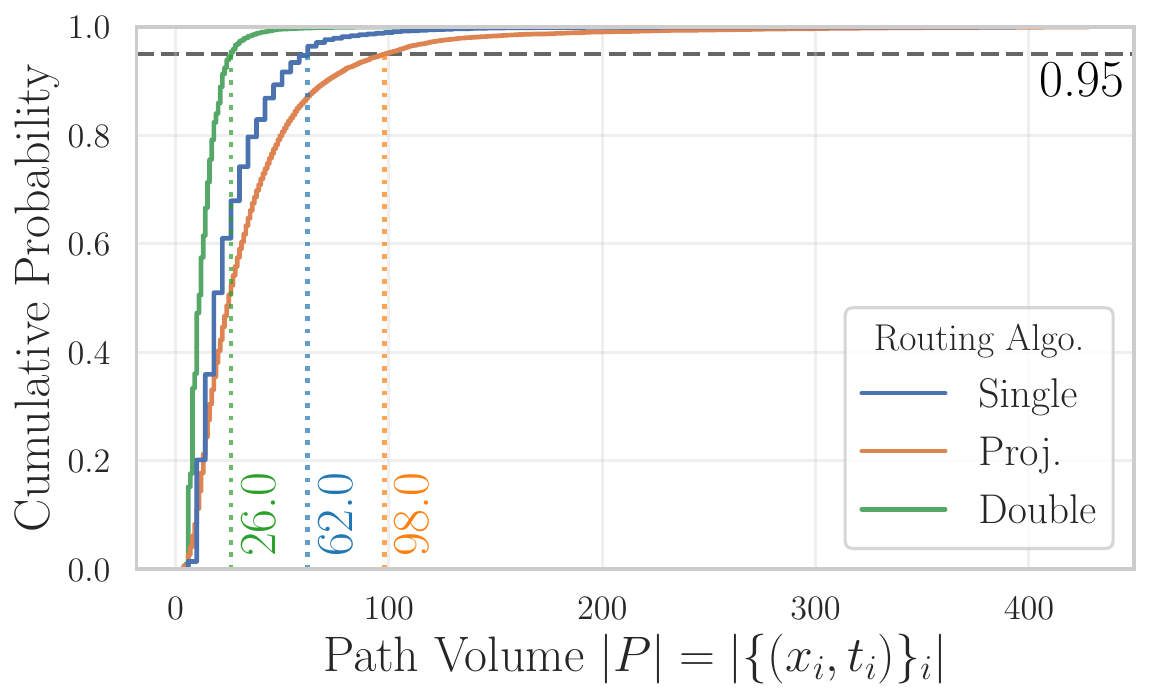}
    \caption{Empirical Cumulative Distribution Function (ECDF) of path volumes for the three routing algorithms evaluated on the SELECT-5 circuit simulating the 2D Heisenberg model. Dotted lines indicate the 95th percentiles, representing the tail performance.}
    \label{fig:path_length_ecdf}
\end{figure}

This robustness is further verified by the Empirical Cumulative Distribution Function (ECDF) shown in \cref{fig:path_length_ecdf}. The 95th percentiles suggest that double-slice routing consistently provides short paths, whereas the tail performance of projective routing is significantly inferior. Notably, the maximum path length for double-slice routing is $146$, whereas those for single-slice routing and projective routing are $394$ and $428$, respectively.

The consistent acquisition of shorter paths, combined with the overall reduction in total circuit volume, strongly suggests a significant improvement in decoding latency. By mitigating extreme path lengths, double-slice routing suppresses potential bottlenecks in real-time error correction. However, as the actual decoding latency depends on specific algorithmic implementations and windowing strategies, a rigorous evaluation remains a subject for future work.

\subsubsection{Performance Evaluation on SELECT and Trotter Circuits}
\label{subsubsec:select_trotter}

\begin{figure*}
    \centering
    \includegraphics[width=\linewidth]{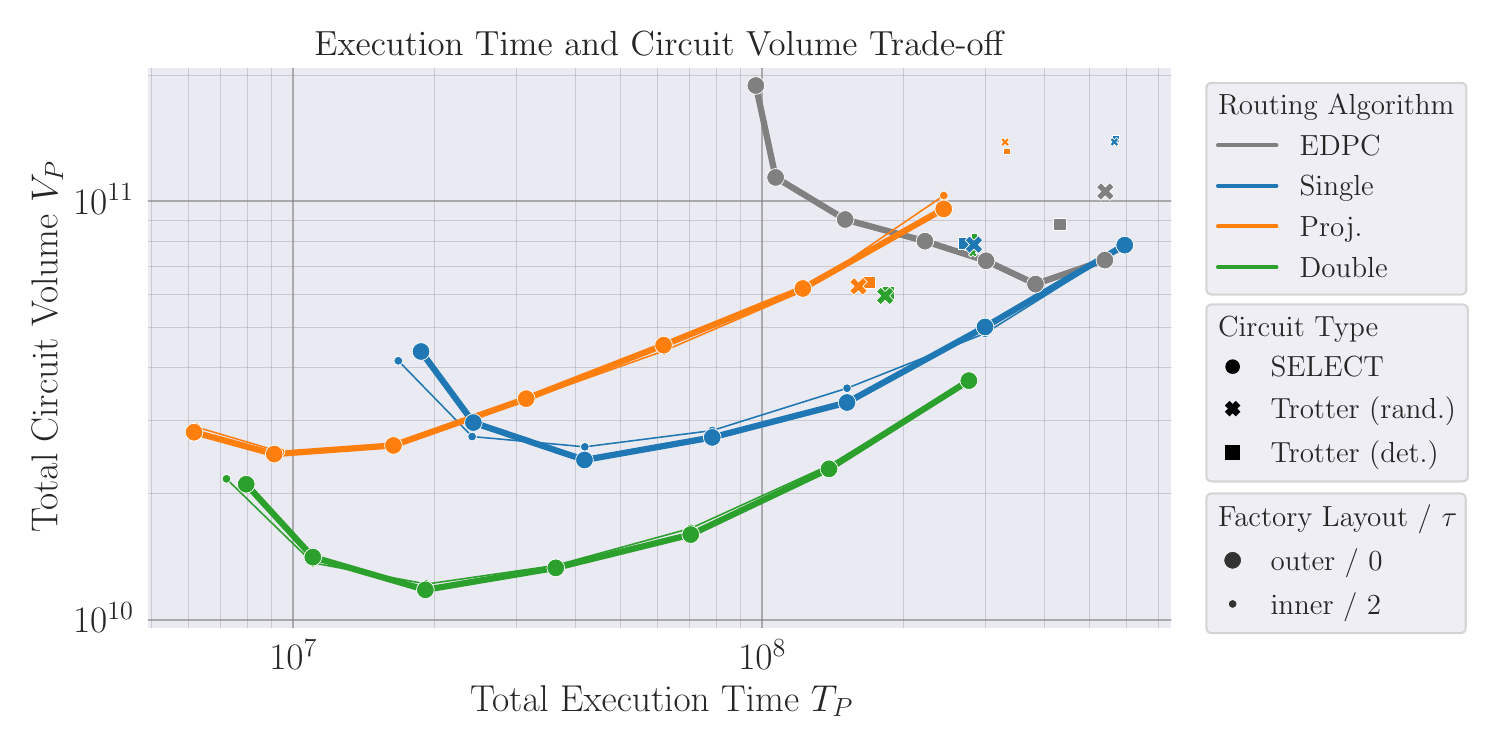}
    \caption{Scheduling performance of each method evaluated on the circuits simulating the 2D Heisenberg model. Each color represents a different routing method, with each point corresponding to a specific circuit. All results were generated using the 2D layout.
    }
    \label{fig:cv_cb_plot}
\end{figure*}

\Cref{fig:cv_cb_plot} presents a performance comparison of different scheduling methods and circuits for the 2D Heisenberg model using the 2D layout. In this comparison, we also included the comparison of factory layouts: the outer layout with $\tau = 0$ and the inner layout with $\tau = 2$. These $\tau$ values are chosen as reasonable values in each factory layout, while a more detailed comparison is deferred to \cref{sec:breakdown_outer_inner}. In this plot, the circuit volume is graphed against the required code beats. The lines in the plot correspond to the SELECT-$i$ circuits with different parallelism levels ranging from $i = 0$ to $6$. The rightmost and leftmost points correspond to $i = 0$ and $i = 6$, respectively.

Regarding the SELECT circuits, we observed that the $2^{i}$-thread parallelization effectively reduces the FTQC runtime up to the tested limit of $i = 6$, while the circuit volume is minimized at $i = 4$ for single- and double-slice routing, and at $i = 5$ for projective routing.

As a comparison of Trotter circuits and SELECT circuits, it is observed that, for each routing method, the SELECT circuits with an appropriate level of parallelization (e.g., $i=4$) outperform the Trotter circuits in both metrics.

Regarding the comparison of the factory layouts, despite the reduced throughput of the MSF patches in the inner layout with $\tau = 2$, its performance remains comparable to that of the outer layouts with $\tau = 0$ on the SELECT circuits. This demonstrates that mitigating path collisions involving magic states offers substantial benefits, effectively compensating for the throughput loss. Conversely, this advantage is slight within the small qubit plane used for the Trotter circuit, resulting in a significant performance degradation when employing the inner layout with $\tau = 2$.

\section{Discussion}
\label{sec:discussion}

In this work, we developed a compilation framework for lattice-surgery-based FTQC that jointly optimizes mapping and routing. We formulated a mapping problem that treats data patches, routing ancillae, and MSF patches on equal footing, thereby enabling inner factory layouts motivated by recent advances in magic state cultivation. We also extended spacetime routing to 2.5D architectures and introduced double-slice routing, a constant-depth alternative to projective routing equipped with a kink-parity correction procedure that terminates under mild assumptions. Benchmarks on Hamiltonian-simulation circuits showed that mapping optimization improves execution time by up to a factor of 3.2 over baseline placements, while double-slice routing reduces circuit volume by up to a factor of 2.4 relative to single-slice routing. These results identify joint optimization of mapping and routing as a practical way to navigate the trade-off between execution time and circuit volume in lattice-surgery compilation.

We envision numerous intriguing future works.
The first is decoder-aware compilation. In the present work, we effectively assume that routing paths can be designed with substantial geometric freedom, so that the compiler optimizes spacetime cost primarily at the level of path packing and scheduling. In a realistic implementation, however, this freedom may be limited by the requirements of real-time decoding. Constructing and validating a dedicated decoder for each dynamically generated routing pattern is likely to be impractical, whereas a more scalable strategy is to precompile and cache a restricted library of decoder-compatible routing motifs. This would impose explicit shape constraints on admissible routes and may substantially change the compilation landscape. An important open problem is therefore to develop compilation methods that minimize spacetime volume while restricting routing to a decoder-friendly family of precompiled path geometries.

Another important direction is to extend spacetime routing beyond pairwise lattice surgery to many-body lattice-surgery primitives. Such an extension appears relevant for approaches that exploit global Pauli rotations, as in Pauli-based computation, or that preserve parallelism by partially applying Clifford conjugations to reduce the support of multi-Pauli measurements. In these settings, the routing object generalizes from a simple path between two operands to a complex tree branching across multiple operands. Given that Ref.~\cite[Theorem 2]{hamadaEfficientHighperformanceRouting2024} has recently formulated the necessary structural conditions and extended projective routing to multi-body cases, integrating our routing algorithm into their theoretical formulation presents a promising direction for future work. Such a synthesis would yield an efficient compilation algorithm for these generalized objects that is also compatible with inner factory layouts.

\section*{Acknowledgments}

N.Y.\ is supported by JST Grant Number JPMJPF2221, JST CREST Grant Number JPMJCR23I4, IBM Quantum, Google Quantum AI, JST ASPIRE Grant Number JPMJAP2316, JST ERATO Grant Number JPMJER2302, and the Institute of AI and Beyond of the University of Tokyo, JST [Moonshot R\&D] [Grant Number JPMJMS256J]. 
Y.S.\ is supported by MEXT Q-LEAP Grant Number JPMXS0120319794 and JPMXS0118068682, MEXT Feasibility Study on the future HPCI, JST [Moonshot R\&D] [Grant Number JPMJMS2061], JST CREST Grant Number JPMJCR24I4 and JPMJCR25I4.
Y.U.\ is supported by JSPS KAKENHI Grant Number JP25K21176.

\bibliographystyle{apsrev4-2}
\bibliography{zotero_generated.bib}

\clearpage
\appendix
\onecolumngrid

\section{Target quantum circuit}
\label{app:target}

In this Appendix, we provide details on the target quantum circuits evaluated in the main text.
We first explain the quantum algorithms considered in this work in \cref{subsec:quantum_algorithm}, and then describe the benchmark Hamiltonians in \cref{subsec:hamiltonian}.

\subsection{Quantum algorithm} \label{subsec:quantum_algorithm}

\subsubsection{Quantum singular value transformation}

Quantum singular value transformation (QSVT) is a general framework that implements polynomial transformations of the singular values of a block-encoded operator through a sequence of signal-processing phase rotations.
Owing to its broad applicability to Hamiltonian simulation, linear-system solvers, ground-state preparation, quantum machine learning, and related tasks, QSVT is widely regarded as one of the central algorithmic primitives for fault-tolerant quantum computing.

An essential bottleneck in QSVT-based Hamiltonian simulation is the implementation of the {\tt SELECT} operator, which is combined with the {\tt PREPARE} operator to realize a block encoding of a target Hamiltonian.
For many lattice Hamiltonians of practical interest, the cost of {\tt PREPARE} is relatively modest, whereas {\tt SELECT} contains a large number of controlled Pauli operations and therefore dominates the logical gate count.
For this reason, the main text focuses on the compilation cost of the {\tt SELECT} circuit.

Concretely, given a decomposition of an $n$-qubit Hamiltonian $H$ into a weighted sum of $L$ Pauli operators as
\begin{equation*}
    H = \sum_{i=1}^{L}\alpha_{i}P_{i}\quad (\alpha_{i}\geq 0, \; P_{i}\in \{\pm1\}\times\{I,X,Y,Z\}^{\otimes n}).
\end{equation*}
We define the SELECT and PREPARE operators as unitaries that act on a composite system of the target space and $a=\lceil \log_2 L \rceil$ ancillary qubit space as follows,
\begin{equation*}
    U_{\mathrm{SELECT}}\defeq \sum_{i=1}^{L}\ketbra{i}{i}\otimes P_{i}, \quad U_{\mathrm{PREPARE}}\ket{0}^{\otimes a}= \sum_{i=1}^{L}\sqrt{\frac{\alpha_{i}}{\lambda}}\ket{i},
\end{equation*}
where $\lambda \defeq \sum_{i=1}^{L}\alpha_{i}$.
One can show that $^{\otimes a}\langle 0 | U_{\rm PREPARE}^\dagger U_{\rm SELECT} U_{\rm PREPARE} | 0\rangle^{\otimes a} = H / \lambda$, indicating the block encoding of the renormalized Hamiltonian.

For lattice Hamiltonians with translationally invariant interaction profiles such as the Heisenberg model and the Fermi--Hubbard model, the PREPARE circuit requires gate complexity of $O(\log n/\epsilon)$ with $\epsilon$ indicating the Clifford+T synthesis accuracy, whereas the SELECT circuit requires a significantly larger gate count of $O(n)$. Indeed, this results in a significant separation in gate count for practical ground state energy estimation problems, and thus it is common to analyze the cost of a single SELECT circuit.
In this context, Ref.~\cite{yoshiokaHuntingQuantumclassicalCrossover2024} proposed a parallelized version of the SELECT circuit that shortens the circuit execution time.
Let $M$ be a power of two less than $L$. The parallelized SELECT circuit divides the  operation into $M$ sub-circuits, each utilizing $\log(L/M)$ index qubits and $\log(L/M)$ ancilla qubits. For moderate $M$, the runtime is reduced by nearly a factor of $M$. This speedup saturates either when the magic state supply rate cannot keep pace with parallel gate execution, or when
$M$ becomes so large that there is insufficient work per thread to exploit further parallelism.

\subsubsection{Trotter Circuits}

Another important quantum algorithm for Hamiltonian simulation is the Trotterization. Here, we exclusively consider the second-order Trotterization.
Assume that we decompose the Hamiltonian as $H=\sum_{j=1}^{m}H_{j}$ where each $H_j$ is composed of mutually commuting Pauli terms. Choosing step size as $\tau=t/r$ with $r\in \mathbb{N}$, the second-order Suzuki--Trotter formula yields
\begin{equation*}
    e^{-iHt} \approx \qty(\prod_{j=1}^{m}e^{-iH_j \tau/2} \prod_{j=m}^{1} e^{-i H_j \tau/2})^{r}.
\end{equation*}
The error between the exact unitary $U = e^{-i H t}$ and the Trotterized unitary $U_{\rm TS}$ can be upper bounded as
\begin{equation*}
    \|U - U_{\rm TS}\| \leq r\frac{\tau^3}{12} \sum_{b = 1}^m \left(
    \left\| \sum_{c > b} \sum_{a > b} [[H_b, H_c], H_a] \right\| + \frac{1}{2} \left \| \sum_{c > b}[[H_b, H_c], H_b] \right \|
    \right) \eqcolon W r \tau^3,
\end{equation*}
where $W$ is further upper bounded based on the Pauli coefficients of terms in $\sum_{c>b} \sum_{a >b}[[H_b, H_c], H_a]$ and $ \sum_{c > b} [[H_b, H_c], H_b]$.

\begin{table}[t]
    \centering
    \caption{Hamiltonians for SELECT and Trotter circuits.}
    \label{tab:Hamiltonians}

    \begin{minipage}{0.48\linewidth}
        \centering
        \textbf{For SELECT circuits}

        \vspace{0.5em}
        \begin{tabular}{lcc}
            \toprule
            name             & size & boundary \\
            \midrule
            FermiHubbard2D   & 10   & cylinder \\
            Heisenberg2D     & 10   & cylinder \\
            Z2LatticeGauge2D & 10   & PBC      \\
            Heisenberg1D     & 100  & OBC      \\
            Random Local     & 100  & OBC      \\
            Schwinger        & 100  & OBC      \\
            \bottomrule
        \end{tabular}
    \end{minipage}
    \hfill
    \begin{minipage}{0.48\linewidth}
        \centering
        \textbf{For Trotter circuits}

        \vspace{0.5em}
        \begin{tabular}{lcc}
            \toprule
            name           & size & boundary \\
            \midrule
            FermiHubbard2D & 10   & cylinder \\
            Random Local   & 10   & OBC      \\
            Schwinger      & 20   & OBC      \\
            Heisenberg2D   & 30   & cylinder \\
            \bottomrule
        \end{tabular}
    \end{minipage}
\end{table}

\subsection{Hamiltonian}\label{subsec:hamiltonian}

We have benchmarked the cost of both SELECT and Trotter circuits for various models as shown in \cref{tab:Hamiltonians}.
While we have only shown the results for the 2D Heisenberg Hamiltonian in the main text, we provide the full data in our GitHub repository~\cite{github}.
In the following, we briefly provide the definition of each model.

\subsubsection{Heisenberg model}

The Heisenberg model describes two-body interactions among quantum spin operators
$\{\hat{S}_{i}^{(a)}\}_{i\in \Lambda, a\in \{X, Y, Z\}}$ representing localized degrees of freedom on a lattice $\Lambda$:
\begin{equation}
    H = \sum_{\langle i, j \rangle \in \Lambda}\left(S_{i}^{x}S_{j}^{x}+ S_{i}^{y}S_{j}^{y}+ S_{i}^{z}S_{j}^{z}\right).
\end{equation}
Here, $\langle i, j \rangle$ denotes a pair of sites connected by an edge of the lattice.
By allowing interaction between sites at Manhattan distance two on $\Lambda$, we can define the $J_{1}$--$J_{2}$ Heisenberg model as
\begin{equation}
    H
    = J_{1}\sum_{\langle i, j\rangle}\left(S_{i}^{x}S_{j}^{x}+ S_{i}^{y}S_{j}^{y}+ S_{i}^{z}S_{j}^{z}\right)
    + J_{2}\sum_{\langle\langle i, j\rangle\rangle}\left(S_{i}^{x}S_{j}^{x}+ S_{i}^{y}S_{j}^{y}+ S_{i}^{z}S_{j}^{z}\right),
\end{equation}
where $\langle\langle \cdot \rangle \rangle$ indicates the summation over pairs of next-nearest sites on the square lattice.
Here, $J_{1}$ and $J_{2}$ represent the interaction strengths between pairs of sites at Manhattan distance 1 and 2, respectively.

While the physical properties of the model depend on the magnitudes of the coupling terms, it is clear that the complexity of the {\tt SELECT} operation is determined by the interaction graph $G=(V,E)$ specifying whether an interaction term is present.
In particular, when each quantum spin has angular momentum $1/2$, the operators $S_{i}^{x,y,z}$ can be identified with Pauli operators (up to an overall constant factor), i.e., $2S_i^{x}\equiv X_i$, $2S_i^{y}\equiv Y_i$, and $2S_i^{z}\equiv Z_i$.

In the main text, we have considered the $J_1$-$J_2$ Heisenberg model of spin 1/2 on the square lattice with $J_1 = 1, J_2 = 0.5$, which is denoted as ``Heisenberg2D'' in \cref{tab:Hamiltonians}.
Note that ``Heisenberg1D'' rather refers to the model with spin 1 with only nearest neighbor interaction on a chain.

\subsubsection{Fermi--Hubbard Hamiltonian}
Among Fermi--Hubbard-type models that include two-body interaction terms,
the most general form can be written as
\begin{equation}
    H = \sum_{i j}t_{i, j}c_{i}^{\dagger}c_{j}+ \sum_{ijkl}V_{ijkl}c_{i}^{\dagger}c_{j}^{\dagger}c_{k}c_{l}.
\end{equation}
Here, $c_{i}^{(\dagger)}$ is the annihilation (creation) operator for the $i$-th fermionic mode, satisfying the anticommutation relation
$\{c_{i}, c_{j}^{\dagger}\} = c_{i}c_{j}^{\dagger}+ c_{j}^{\dagger}c_{i}= \delta_{ij}$.
Moreover, $t_{ij}$ represents hopping between modes $i$ and $j$, and $V_{ijkl}$ specifies the interaction strength.
For first-principles Hamiltonians describing electronic states in molecules and solids, $t_{ij}$ and $V_{ijkl}$ are typically dense, whereas in condensed-matter physics, both $t$ and $V$ are often specified sparsely.
In particular, one often associates the Hamiltonian with a graph $G=(V,E)$ and considers the following form:
\begin{equation}
    H = (-t)\sum_{(i, j)\in E}\sum_{\sigma \in \{\uparrow, \downarrow\}}(c_{i,\sigma}^{\dagger}c_{j, \sigma}+{\rm h.c.})
    + U\sum_{i\in V}c_{i, \uparrow}^{\dagger}c_{i,\uparrow}c_{i, \downarrow}^{\dagger}c_{i, \downarrow}
    - \mu \sum_{i\in V}\sum_{\sigma \in \{\uparrow, \downarrow\}}c_{i, \sigma}^{\dag}c_{i, \sigma}.
    \label{eq:fermi-hubbard}
\end{equation}
Here, $\sigma \in \{\uparrow, \downarrow\}$ denotes the spin quantum number. The Hubbard model was originally introduced to describe electronic states, reflecting the fact that electrons are spin-$1/2$ particles.

To simulate fermionic systems on a quantum computer, one needs to translate fermionic operators into qubit operators. Several such mappings are known; here we focus on the widely used Jordan--Wigner transformation. Concretely, the correspondence is given by
\begin{equation}
    c_{i}= \left(\prod_{j<i}Z_{j} \right)\frac{X_{i}+ i Y_{i}}{2}, \ \
    c_{i}^{\dagger}= \left(\prod_{j<i}Z_{j}\right)\frac{X_{i}- i Y_{i}}{2}, \ \
    c_{i}^{\dagger}c_{i}= \frac{1 - Z_{i}}{2}.
\end{equation}
Using this mapping, Eq.~\eqref{eq:fermi-hubbard} can be rewritten as
\begin{equation}
    H = -\frac{t}{2} \sum_{(i, j)\in E}\sum_{\sigma\in \{\uparrow, \downarrow\}}
    \left(X_{i, \sigma}\vec{Z}X_{j, \sigma}+ Y_{i, \sigma}\vec{Z}Y_{j, \sigma}\right)
    + U\sum_{i \in V}\frac{1-Z_{i, \uparrow}}{2}\frac{1-Z_{i, \downarrow}}{2}
    - \mu\sum_{i\in V}\sum_{\sigma\in\{\uparrow, \downarrow\}}\frac{1-Z_{i, \sigma}}{2},
\end{equation}
where, in $\vec{Z}= \prod_{k}Z_{k}$, the index $k$ runs over the qubits corresponding to the modes between $(i,\sigma)$ and $(j,\sigma)$ in the chosen Jordan--Wigner ordering.

In our work, we have considered the Fermi--Hubbard model on a square lattice with cylindrical boundary conditions. The hopping term is taken as the unit of energy $t=1$, the interaction term is taken as $U/t = 4$, and the chemical potential is taken as $\mu = 0$.

\subsubsection{Random \texorpdfstring{$k$}{k}-local Hamiltonian}
To benchmark the scaling behavior of our compilation costs on unstructured instances, we also consider a synthetic ensemble of random $k$-local Hamiltonians on $N$ qubits.
We generate a Hamiltonian as a sum of $M$ distinct, nontrivial Pauli strings with $\pm 1$ coefficients:
\begin{equation}
    H = \sum_{m=1}^{M} \alpha_m P_m, \qquad \alpha_m \in \{+1,-1\},
\end{equation}
where each $P_m$ is an $N$-qubit Pauli operator of the form
\begin{equation}
    P_m = \bigotimes_{i=1}^{N} \sigma_i^{(m)}, \qquad \sigma_i^{(m)} \in \{I,X,Y,Z\},
\end{equation}
with the locality constraint
\begin{equation}
    \mathrm{wt}(P_m) \coloneqq \left|\left\{ i \in [N] \,:\, \sigma_i^{(m)} \neq I \right\}\right| = k,
\end{equation}
and we explicitly exclude the identity term (i.e., $\mathrm{wt}(P_m)\ge 1$).

In the data provided in GitHub~\cite{github}, we have used $k=4$ and chosen the number of terms as $M=\lfloor N^{3/2}\rfloor$ (e.g., $N=100$ yields $M=1000$), with a fixed random seed for reproducibility.

\subsubsection{Schwinger model (\texorpdfstring{$1+1$}{1+1}D \texorpdfstring{$Z_2$}{Z2} lattice gauge theory)}
A commonly used qubit-representation Hamiltonian~\cite{hondaNegativeStringTension2022} of $1+1$D $Z_2$ lattice gauge theory model is given by
\begin{equation}
    H = J\sum_{n=0}^{N-2}\left[ \sum_{i=0}^{n}\frac{Z_{i}+ (-1)^{i}}{2}+ \frac{\theta_{n}}{2\pi}\right]^{2}
    + \frac{w}{2}\sum_{n=0}^{N-2}[X_{n}X_{n+1}+ Y_{n}Y_{n+1}]
    + \frac{m}{2}\sum_{n=0}^{N-1}(-1)^{n}Z_{n}.
\end{equation}
Here, $J,w,m \in{\mathbb{R}}$ are model parameters, and $\theta_{n}$ is a site-dependent (topological) term.
As mentioned above, the {\tt SELECT} operation does not depend on the coefficient magnitudes, and hence it is uniquely determined (except in the trivial case where a parameter is exactly zero and the corresponding term is absent).
The bottleneck in this Hamiltonian stems from the all-to-all connectivity induced by the first term.
In our data~\cite{github}, we have taken parameters as $J=1, w=1, m = 0, \theta_n = 0.5$.

\subsubsection{2+1D \texorpdfstring{$Z_2$}{Z2} lattice gauge theory}
In addition to the Schwinger model, we consider the (2+1)-dimensional  $Z_{2}$ gauge theory. Following Ref.~\cite{PRXQuantum.3.020320}, we assume that qubits reside on the edges of a square lattice.
For a lattice of size $L\times L$, this yields a total of $2L^{2}$ qubits. The Hamiltonian is defined as the sum of two contributions, corresponding to the electric- and magnetic-field terms,
\begin{align}
    H     & = H_{E}+ h H_{B},                          \\
    H_{E} & = \sum_{i=1}^{2L^2}\bigl( 1- X_{i}\bigr),  \\
    H_{B} & = - \sum_{p} Z_{p_1}Z_{p_2}Z_{p_3}Z_{p_4},
\end{align}
where $i$ runs over lattice edges, $p$ runs over plaquettes, and $(p_1,p_2,p_3,p_4)$ denotes the four edges surrounding a plaquette $p$.

\section{Conversion of Quantum Circuits to Instruction Sequences}
\label{app:conversion}

As mentioned in \cref{subsec:sa_setting}, we convert quantum circuits into instruction sequences consisting solely of $\texttt{CX}$, $\texttt{MAGIC\_MOVE}$, and $\texttt{MAGIC\_MZZ}$. These operations are defined as follows: $\texttt{CX}(c,t)$ is a controlled-$X$ gate acting on control $c$ and target $t$; $\texttt{MAGIC\_MOVE}(t)$ fetches a magic state prepared at an MSF patch to qubit $t$; and $\texttt{MAGIC\_MZZ}(t)$ performs a $ZZ$ measurement between qubit $t$ and a magic state at an MSF patch. In what follows, we describe the conversion rules.

Firstly, for simplicity and clarity, we restrict our attention to compilations that use only \emph{paths}, i.e., pairwise connections realized as lattice-surgery paths. Since multiqubit measurements can be decomposed into sequences of two-qubit operations, this forms a canonical and practically relevant subclass of lattice-surgery compilations. Alternatively, multiqubit gates can be viewed as Steiner trees that connect the terminals corresponding to the involved logical qubits; scheduling and packing of these connections can then be treated as a Steiner tree packing problem~\cite{silvaMultiqubitLatticeSurgery2024}.

Universal quantum computation necessitates logical instructions such as $H$, $S$, $T$, and single-qubit initialization and measurement in the computational basis. A typical implementation of $T$-gates combines magic state preparation with lattice-surgery-based teleportation. Since the remaining primitives occupy localized space and particularly single-qubit initialization and measurement complete within a negligible time compared to lattice-surgery operations, their overhead is significantly lower than that of lattice-surgery operations in large-scale quantum computation~\cite{litinskiGameSurfaceCodes2019}. Consequently, we focus our analysis on magic state throughput and lattice-surgery scheduling, omitting the contributions of localized single-qubit primitives.

Based on the above, the multiqubit unitary gates appearing in the input circuit can be replaced as follows. First, controlled-$Y$ and controlled-$Z$ gates are reduced to controlled-$X$ via conjugation with $S$ and $H$ gates, respectively. Since these single-qubit gates are neglected in our setting, we can treat all such controlled operations as $\texttt{CX}$.
Furthermore, the three-qubit Toffoli gate (CCX), which acts on two control qubits $c_{1},c_{2}$ and a target qubit $t$, can similarly be replaced using $\texttt{MAGIC\_MOVE}$ and $\texttt{CX}$ gates, as shown in \citet[Fig.~S20(a)]{yoshiokaHuntingQuantumclassicalCrossover2024}. The remaining $T$ gate can also be replaced with $\texttt{MAGIC\_MZZ}$.
Although $S$-gate feedback is neglected in this configuration, a feedback challenge arises when the determination of the measurement outcome is significantly delayed, as discussed in \cref{subsec:space_time_routing_limitation}.

By applying the above conversion rules, the quantum circuit is transformed to contain only $\texttt{CX}$, $\texttt{MAGIC\_MOVE}$, and $\texttt{MAGIC\_MZZ}$ operations.

\section{Supplement for Architectures}
\label{app:other_architectures}

In this section, we present the detailed information on the architecture.

\subsection{Floorplan}\label{app:floorplan}

\begin{figure}[t]
    \centering
    \begin{minipage}{0.3\columnwidth}
        \centering
        \textbf{$1/4$-floorplan} \includegraphics[height=\columnwidth]{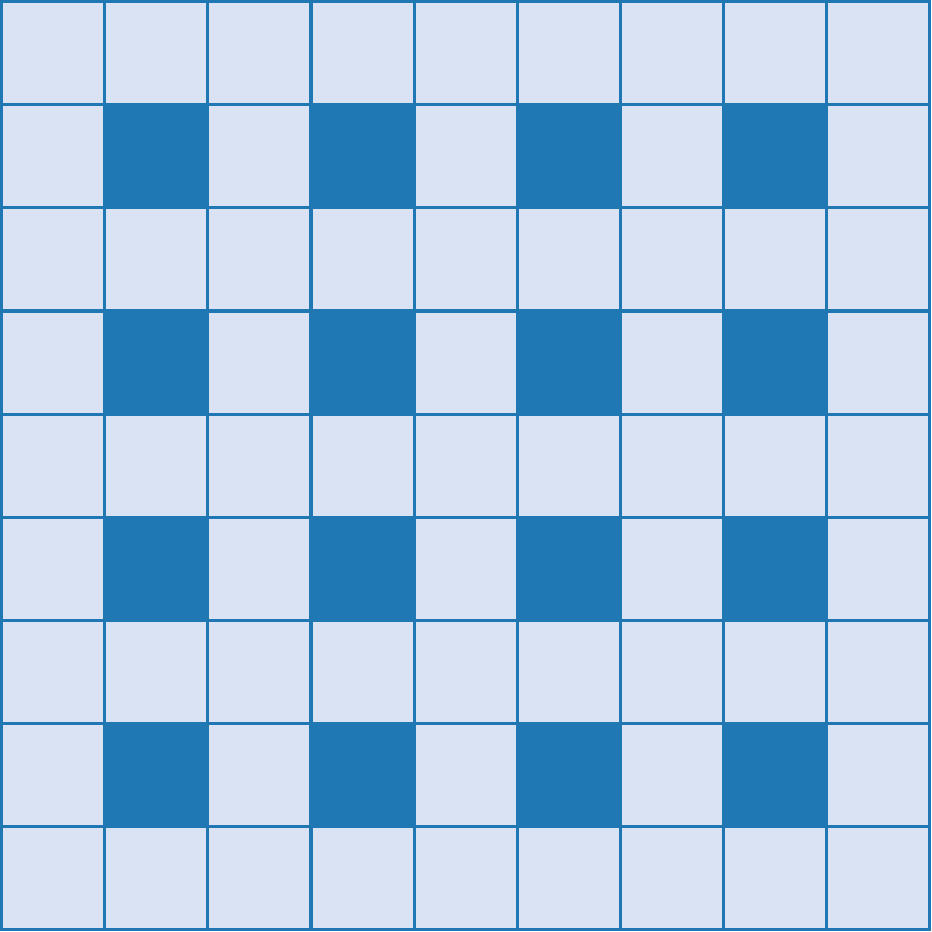}
    \end{minipage}%
    \hspace{0.04\columnwidth}%
    \begin{minipage}{0.3\columnwidth}
        \centering
        $4/9$-floorplan \includegraphics[height=\columnwidth]{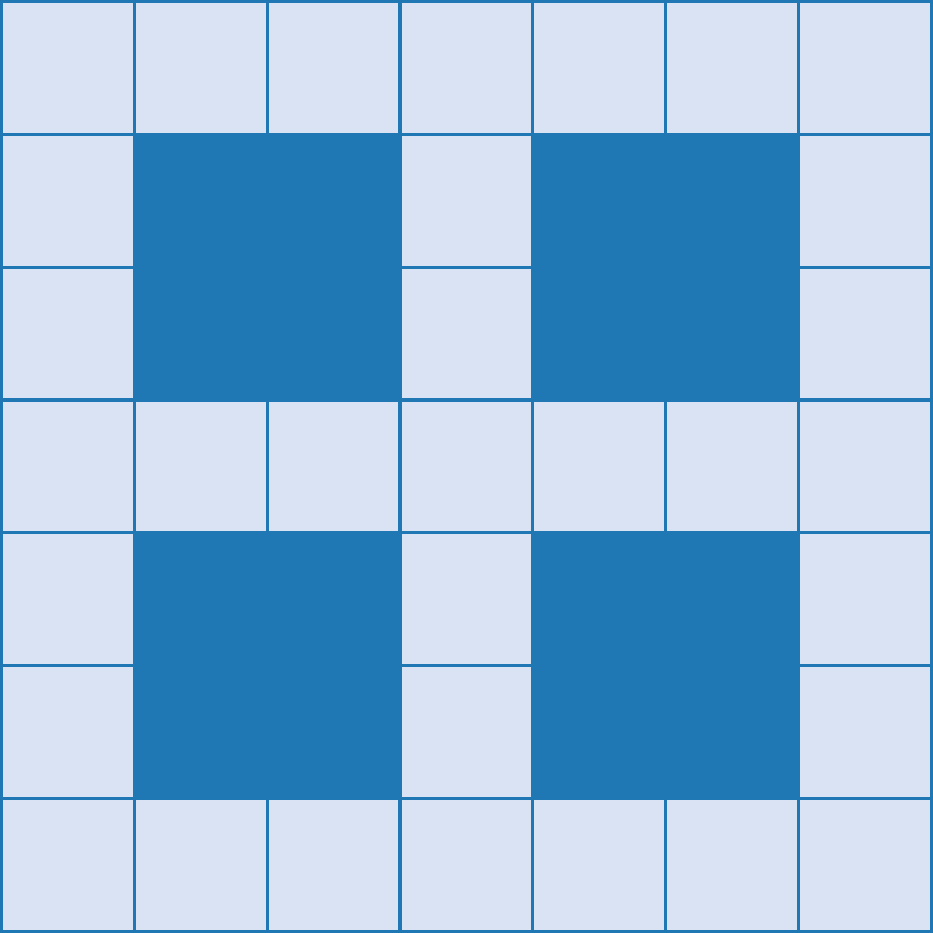}
    \end{minipage}%
    \hspace{0.04\columnwidth}%
    \begin{minipage}{0.3\columnwidth}
        \centering
        $1/2$-floorplan \includegraphics[height=\columnwidth]{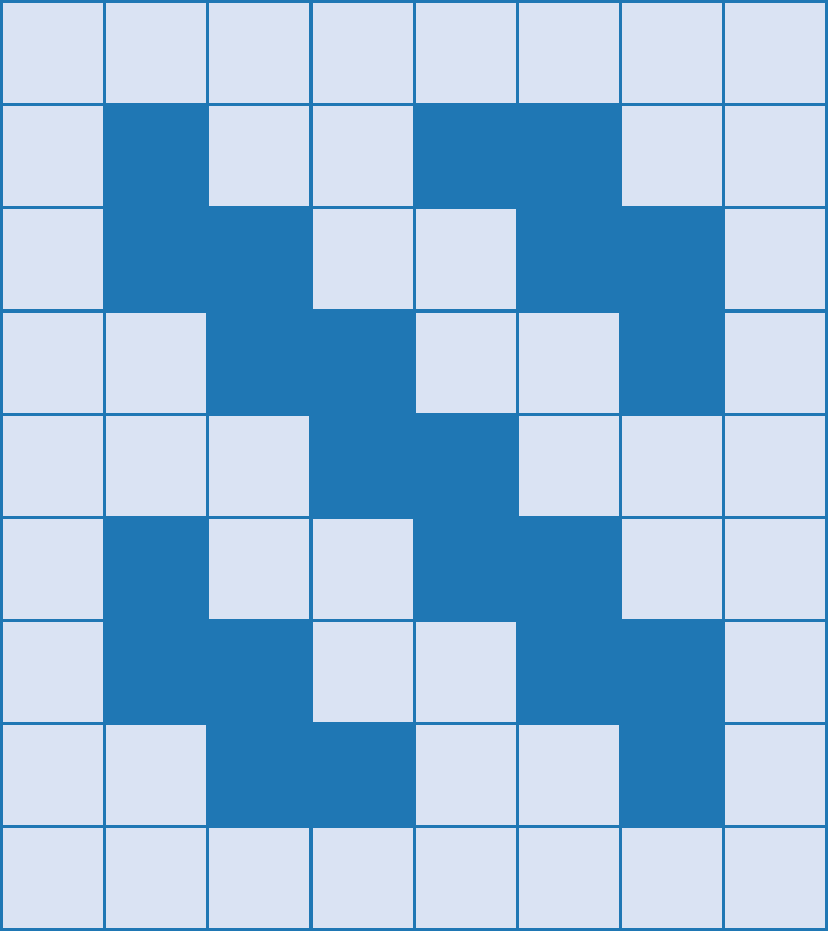}
    \end{minipage}
    \caption{The floorplans for the surface-code-based architectures. The dark and light patches represent the data and bus patches, respectively. This paper adopts the $1/4$-floorplan.}
    \label{fig:floorplan}
\end{figure}

The underlying layout of data and bus qubits in lattice surgery is termed a floorplan, reflecting the inherent flexibility in how these components are arranged on the surface code.
Several floorplans allow connections to both $X$-boundaries and $Z$-boundaries without rotation, such as the $1/4$-floorplan \cite{beverlandSurfaceCodeCompilation2022}, the $4/9$-floorplan \cite{chamberlandUniversalQuantumComputing2022}, and the $1/2$-floorplan \cite[Fig. 2]{uenoHighPerformanceScalableFaultTolerant2024} (see \cref{fig:floorplan}).
Since this work considers general quantum circuits as inputs (requiring the use of both $X$- and $Z$-boundaries) and focuses on the optimization of scheduling and mapping, we adopt the $1/4$-floorplan.
It is worth noting that there exists a trade-off between qubit occupancy and path collision rate: the lower the fraction of occupied sites assigned to data qubits, the greater the room for optimization. For this reason, we focus on layouts with relatively low data-qubit occupancy, thereby treating the problem as a classical optimization problem with substantial flexibility. In contrast, when the occupancy is high, optimal routing can often be derived almost trivially from the input.

In our numerical experiments, the data chip is modeled as a rectangular lattice of size $w\times h\times l$, where the number of layers satisfies $l\in\{1,2\}$. The case $l=1$ corresponds to the 2D layout, whereas $l=2$ represents the 2.5D layout consisting of two layers. Recall that $n$ is the number of logical data qubits (\cref{subsec:sa_setting}).
The values of $w$ and $h$ are defined by
\begin{equation*}
    w=h=\qty(2\left\lceil\sqrt{n/l}\right\rceil+1)+2.
\end{equation*}
The constant $+2$ arises because the perimeter is occupied by MSF as in \cref{fig:layout,tab:mapping_images}.
These $w,h$ guarantee that the layout provides at least
\begin{equation*}
    l \frac{w-3}{2}\frac{h-3}{2} = l \left\lceil\sqrt{n/l}\right\rceil^2 \ge n
\end{equation*}
patches for data qubits.
Recall that $X_{\mathrm{MSF}}$ denotes the set of patches on which MSFs are actually placed, and $n_{\mathrm{MSF}}= \abs{X_{\mathrm{MSF}}}$ is the number of MSFs used. Specifically, we impose the constraint
\begin{equation}
    n_{\mathrm{MSF}} = 4\left\lceil\sqrt{n}\right\rceil+4
\end{equation}
for all cases. This value corresponds to the maximum number of MSFs available when $l=1$, and we therefore restrict the number of MSFs accordingly when $l=2$.

\subsection{2.5D layout}
\label{sec:25d_layout}

In contrast to conventional 2D qubit plane architectures, recent theoretical and experimental proposals have introduced more complex architectures stacking a couple of 2D qubit layers~\cite{pattisonHierarchicalMemoriesSimulating2025,rametteFaulttolerantConnectionErrorcorrected2024,haArchitecturesLatticeSurgerybased2025,viszlaiInterleavedLogicalQubits2025,uenoHighPerformanceScalableFaultTolerant2024,liao2026breakingscalabilitybarriervertical}, which we call a 2.5D layout.
Among them, we focus in particular on a two-layer configuration in this paper.

To fully exploit the enhanced qubit connectivity of the 2.5D layout, it is vital to incorporate inter-layer operations into the routing framework. 
An example of such inter-layer operations is the transversal CNOT gate~\cite{pattisonHierarchicalMemoriesSimulating2025,rametteFaulttolerantConnectionErrorcorrected2024,viszlaiInterleavedLogicalQubits2025}.
It can be performed between two adjacent patches across layers in either direction, provided their boundary orientations are identical. 
Since transversal CNOTs are completed in a single code cycle, they are treated as negligible when quantifying lattice-surgery costs in code beats.
Note that a 2.5D architecture supporting transversal CNOTs requires $O(d^2)$ inter-layer connections per surface code patch.

As an alternative inter-layer operation, we can employ inter-layer lattice surgery~\cite{haArchitecturesLatticeSurgerybased2025,uenoHighPerformanceScalableFaultTolerant2024,viszlaiInterleavedLogicalQubits2025}.
Whereas conventional 2D lattice surgery merges spatially adjacent patches, the 2.5D configuration enables the merging of ``diagonally'' adjacent patches across different layers; specifically, this layout allows for simultaneous movement in both spatial and inter-layer directions. 
A 2.5D architecture supporting this technique requires the two layers to be connected only along the edges of the surface code patches. 
Consequently, the required inter-layer connections per patch are reduced to $O(d)$, offering a distinct hardware cost advantage compared to the transversal CNOT approach.

A notable related work is the \emph{Bypass architecture} proposed by Ref.~\cite{uenoHighPerformanceScalableFaultTolerant2024}, which leverages a biplanar architecture consisting of a dense Logic layer for main computation and a sparse Bypass layer for lattice-surgery routing. 
Compared with conventional stacking of qubit planes, their design reduces effective path lengths via the sparsity of the Bypass layer, thereby improving fidelity and execution time. 
Since their evaluation employs a greedy scheduling policy \cite[Sec.~3.1]{uenoHighPerformanceScalableFaultTolerant2024} and random logical-qubit allocation \cite[Sec.~7.2]{uenoHighPerformanceScalableFaultTolerant2024}, applying our optimized mapping and routing techniques to this architecture represents an intriguing prospect for future work to achieve even higher performance.

\section{Numerical experiment on impact of magic state preparation time}

In this section, we numerically compare the architectures and outline the general trend of their performances.

\subsection{Comparison of the \texorpdfstring{$\tau$}{tau} value}
\label{subsubsec:tau_comparison}

This section investigates the impact of the magic state preparation time $\tau$ introduced in \cref{sec:performance_metric}.
This parameter is particularly critical in the early FTQC era, where non-negligible error rates will likely necessitate multiple attempts to generate a magic state. To model this situation, a larger value of $\tau$ must be considered. We therefore compared the architectures' performance with $\tau$ ranging from $0$ to $10$ in increments of two, reflecting scenarios from ideal to error-prone conditions.

We use the following settings to compare the architectures for different $\tau$ values.
\begin{description}
    \item[Circuit] SELECT-6 circuit for 2D Heisenberg model.

    \item[Allocation] Simulated annealing.

    \item[Routing] Double-slice routing (correcting kink parity).
\end{description}
We adopted these settings to obtain optimized results. Specifically, we chose double-slice routing for comparison because it yields a solution with a small circuit volume, making it particularly well-suited for the early FTQC era, where a larger value of $\tau$ should be considered.

\begin{figure}
    \centering
    \includegraphics[width=0.5\linewidth]{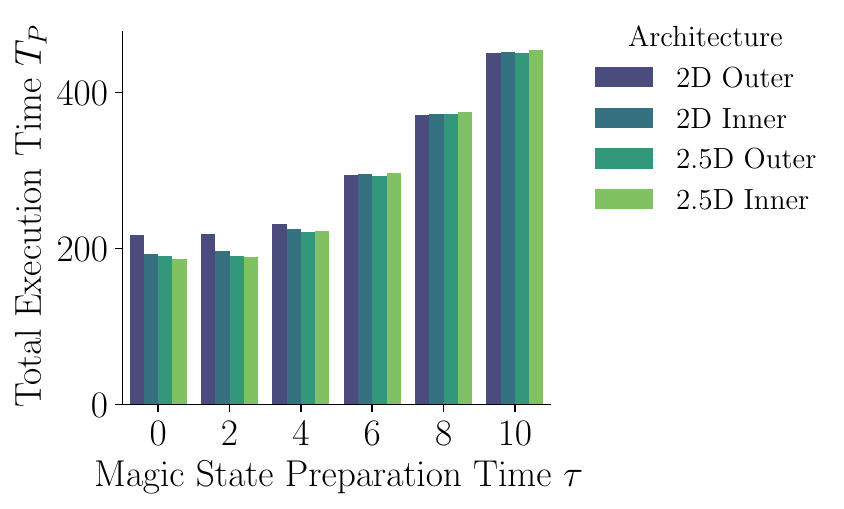}
    \caption{Influence of magic state preparation time $\tau$ on the relative performance of the four architectures. The plot compares the total execution time for the SELECT-6 circuit for the 2D Heisenberg model to assess how the performance gap between architectures changes as $\tau$ is increased. All results were generated using double-slice routing and SA allocation.
    }
    \label{fig:prep_time}
\end{figure}

\Cref{fig:prep_time} shows the effect of magic state preparation time on the performance comparison of various architectures. The execution time increases with the magic state preparation time, as the number of MSFs is fixed. This growing bottleneck reduces the performance differences between the architectures, thus preventing a meaningful comparison.

It is anticipated that increasing the magic state preparation time and the number of MSFs while maintaining the overall generation rate would result in a smooth transition from the ideal inner factory layout to the outer factory layout. Specifically, when the generation rate is slow and the number of MSFs is large, internal MSF embedding loses its meaning. In such cases, the system would function by bundling the outer MSFs and fetching only the successful ones, which in turn mimics the behavior of the outer factory layout.

\subsection{General Performance Trends}

This section presents the general performance trends of the architectures. As expected from their designs, 2.5D layouts outperformed 2D layouts, and inner factory layouts outperformed outer factory layouts. The superiority of 2.5D layouts was more conspicuous than that of inner factory layouts.
\begin{figure}
    \centering
    \includegraphics[width=0.4\linewidth]{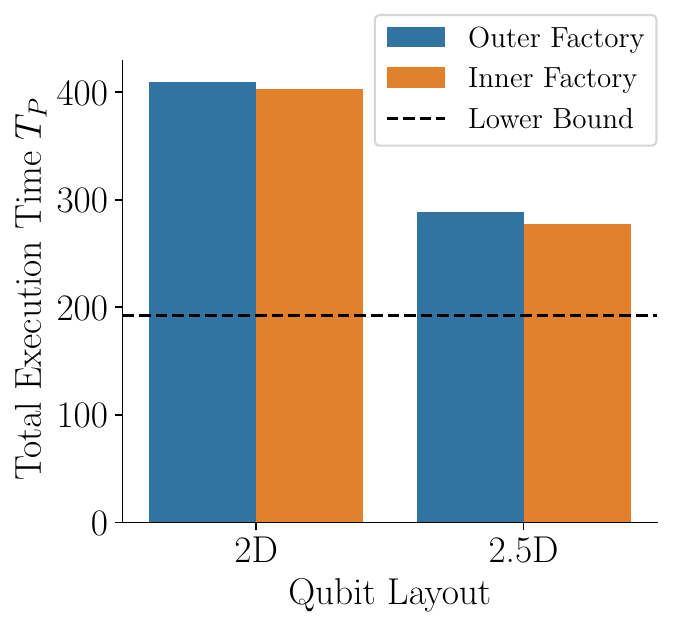}
    \caption{Comparison of layouts when SA-based mapping and double-slice routing are applied to the SELECT-6 circuit for random Hamiltonian. The magic state preparation time $\tau$ is set to $2$ in this experiment.}
    \label{fig:select_random_local_archi_comparison}
\end{figure}

The performance of 2.5D layouts shows a clear advantage when dealing with complex routing demands, as illustrated by the SELECT circuit for random Hamiltonian (\cref{fig:select_random_local_archi_comparison}). This is a natural consequence of its ability to mitigate routing collisions. A similar trend is observed in the SELECT circuit for the Schwinger Hamiltonian.

\begin{figure}
    \centering
    \includegraphics[width=0.5\linewidth]{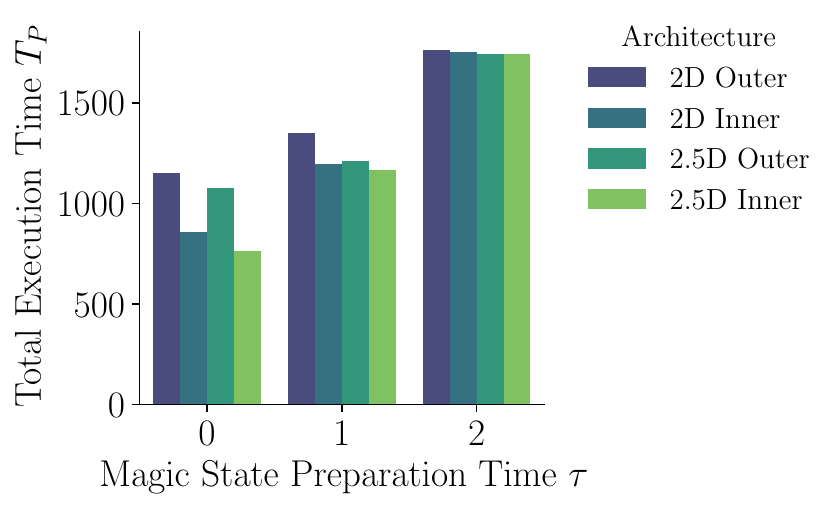}
    \caption{Performance comparison of the four architectures on the Trotter circuit for the 2D Heisenberg model based on the deterministic decomposition. The plot shows total execution time as a function of magic state preparation time. The four architectures are distinguished by color. All results were generated using double-slice routing and SA allocation.}
    \label{fig:trotter_msf_prep_0_to_2}
\end{figure}

Conversely, the performance advantage of inner factory layouts appears less pronounced compared to that of 2.5D layouts. A noticeable improvement is observed for several circuits, such as the SELECT circuit for the Heisenberg Hamiltonian, as shown in \cref{fig:prep_time}.

It is noteworthy that, although it may not be a realistic scenario, the Trotter circuit for the Heisenberg Hamiltonian with $\tau = 0$ exhibits a significant advantage of inner factory layouts (\cref{fig:trotter_msf_prep_0_to_2}). This finding suggests that, at a larger circuit scale and with a sufficient number of MSFs, the Trotter circuit could also demonstrate the benefits of inner factory layouts under realistic conditions.

\section{Routing Details}
\label{app:routing}

\subsection{Extension of Spacetime Routing to 2.5D layouts}
\label{subsec:35D_routing}

The spacetime routing technique hinges on the kink condition, a structural constraint on spacetime paths. While this condition arises as a necessary requirement for consistently assigning measurement bases to each spatial path segment, it is also sufficient to realize intended two-body operations, such as logical qubit movement, joint $XX$ and $ZZ$ measurements, and CNOT gates. 
In the remainder of this section, we first introduce the underlying circuit equivalence in \cref{subsubsec:circuit_equivalence}, followed by the reduction for 2D layouts in \cref{app:spacetime_2d_simplification} and its generalization to 2.5D layouts in \cref{subsubsec:spacetime_routing_extension_proof}. 

\subsubsection{Simplification of Measurement Chains}
\label{subsubsec:circuit_equivalence}

To prove sufficiency, we utilize the equivalence between a chain of two-body measurements and a single two-body operation. This equivalence is implicit in the simplification procedure of Ref.~\cite[Section 3.3.2]{hamadaEfficientHighperformanceRouting2024}, and we formalize the result as follows.
\begin{lemma}[{\cite[Section 3.3.2]{hamadaEfficientHighperformanceRouting2024}}]
    \label{lem:simplify_chain}
    Consider a sequence of qubits $Q_0, \dots, Q_N$ linked by two-body Pauli measurements $M_1, \dots, M_N$, where each $M_i$ is a joint $XX$ or $ZZ$ measurement on $Q_{i-1}$ and $Q_i$. The execution order of these measurements is arbitrary. Qubits $Q_0$ and $Q_N$ serve as targets, while the intermediate qubits $Q_1, \dots, Q_{N-1}$ are ancillae initialized in $\ket{0}$ or $\ket{+}$ and measured in Pauli $X$ or $Z$ basis. If the initialization (resp.\ measurement) basis of each ancilla is distinct from the basis of the succeeding (resp.\ preceding) two-body measurement, the circuit implements an effective two-body operation on $Q_0$ and $Q_N$ via appropriate Pauli feedback. The resulting operation depends on the boundary types of $Q_0$ and $Q_N$, as shown in \cref{tab:simplify_chain}.
\end{lemma}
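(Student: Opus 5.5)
The plan is to argue in the stabilizer formalism and collapse the chain inductively, one ancilla at a time, reducing everything to the case $N=1$. Two facts make this work.

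First, the measurements $M_1,\dots,M_N$ commute pairwise whenever they do not share a qubit. Consecutive measurements $M_i$ and $M_{i+1}$ share $Q_i$, but $Q_i$ is an ancilla that is initialized before and measured after both of them. So, up to Pauli frame updates, the claim ``execution order is arbitrary'' reduces to the following: conjugating a two-body Pauli measurement through an earlier or later single-qubit Pauli measurement only changes signs of outcomes. I would first show that the overall Clifford channel (the product of projectors $\frac{1}{2}(I\pm P)$) is order independent up to such Pauli corrections, by tracking stabilizer generators. Then I would fix a convenient order: for each ancilla $Q_i$, initialization, then $M_i$ and $M_{i+1}$, then its measurement.

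The core local step is a single ancilla $Q_i$ between $M_i$ (basis $B_i$) and $M_{i+1}$ (basis $B_{i+1}$). The hypothesis says the initialization basis differs from $B_{i+1}$ and the measurement basis differs from $B_i$. I would treat two cases.

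\textbf{Case $B_i = B_{i+1}$.} Say both are $ZZ$. Then $Q_i$ is prepared in $\ket{+}$ and measured in $X$. This is the standard chain $Z_{i-1}Z_i$, $Z_iZ_{i+1}$, $X_i$. The product of the two outcomes gives $Z_{i-1}Z_{i+1}$, and the $X$ outcome fixes a Pauli byproduct on the neighbours. The net effect is a single $ZZ$ measurement between $Q_{i-1}$ and $Q_{i+1}$.

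\textbf{Case $B_i \neq B_{i+1}$.} This is the kink: one step is $Z_{i-1}Z_i$ and the other is $X_iX_{i+1}$, with matching non-commuting preparation and readout. The standard measurement-based CNOT identity shows that the composite acts as a CNOT between $Q_{i-1}$ and $Q_{i+1}$, up to Pauli feedback.

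Iterating, a chain whose number of basis switches has a given parity collapses to one of the entries in \cref{tab:simplify_chain}.

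The key step I expect to be the main obstacle is composing these local reductions uniformly. The CNOT is not itself a two-body Pauli measurement, so a naive induction does not close. I would instead strengthen the inductive hypothesis: the partial chain $Q_0,\dots,Q_k$ implements, up to Pauli feedback, a map of the form ``two-body measurement in basis $B$'' or ``CNOT-like teleportation'' on $Q_0,Q_k$, determined by $B_1$, $B_k$ and the switch parity. I would verify this by tracking how the logical operators $X_0,Z_0,X_N,Z_N$ propagate through the chain via stabilizer updates: each ancilla contributes a stabilizer that links $P_{i-1}$ to $P_{i+1}$ when $B_i=B_{i+1}$, and swaps roles when they differ.

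Finally, I would match the end bases $B_1$ and $B_N$ to the boundary types of $Q_0$ and $Q_N$, read off the resulting operation, and compare it with the table. Throughout, the Pauli feedback is collected as a product of measurement outcomes, which is the standard frame-tracking argument.
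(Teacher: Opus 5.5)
The paper gives no self-contained proof of this lemma; it attributes it to the chain-simplification procedure of the cited reference, so your argument has to stand on its own. Its weak point is the handling of the execution order, which is what the spacetime-routing application needs, since a path may go up and then down in time.

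\textbf{The misreading.} In the statement, ``succeeding'' and ``preceding'' are temporal. The preparation basis of $Q_i$ must differ from the basis of whichever of $M_i, M_{i+1}$ is executed first, and the readout basis from that of whichever is executed second. Your restatement (preparation $\neq B_{i+1}$, readout $\neq B_i$) is correct only when $M_{i+1}$ is executed before $M_i$. If, as written, you fix the order preparation, $M_i$, $M_{i+1}$, readout, then a kinked ancilla ($B_i\neq B_{i+1}$) is prepared in a $B_i$ eigenstate. Then $M_i$ degenerates into a single-qubit measurement of $Q_{i-1}$ and destroys the data.

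\textbf{The false commutation claim.} The claim that the product of projectors is order independent up to Pauli corrections is false. For $B_i\neq B_{i+1}$, the measurements $M_i$ and $M_{i+1}$ anticommute on $Q_i$. With preparation $\ket{+}$ and $Z$ readout held fixed, executing $X_iX_{i+1}$ before $Z_{i-1}Z_i$ just measures $X_{i+1}$ and $Z_{i-1}$. Different execution orders are therefore genuinely different circuits, because the hypothesis changes the ancilla bases with the order. Commutation alone cannot reduce them to one convenient order.

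Your local computations and your instinct to strengthen the induction are sound, but two ingredients are missing.

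\emph{Local check.} You must show that both admissible protocols for a kinked ancilla give the same gate. The first is preparation $\ket{+}$, then $Z_{i-1}Z_i$, then $X_iX_{i+1}$, then $Z$ readout. The second is its time reverse: preparation $\ket{0}$, then $X_iX_{i+1}$, then $Z_{i-1}Z_i$, then $X$ readout. Both realize a CNOT from $Q_{i-1}$ to $Q_{i+1}$ up to Paulis.

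\emph{Global check.} Collapsing an interior ancilla yields a gate smeared over the interval between $M_i$ and $M_{i+1}$. Operations on neighbouring ancillae may fall inside that interval, so ``collapse one ancilla at a time'' needs an explicit block structure. A clean way is to induct on $N$ by splitting off the last link:
\begin{itemize}
\item Move the preparation of $Q_{N-1}$ to the very beginning and its readout to the very end.
\item Since $M_N$ shares a qubit only with $M_{N-1}$, commute it to lie entirely before or entirely after all operations of the sub-chain $Q_0,\dots,Q_{N-1}$, in which $Q_{N-1}$ is now regarded as a target.
\item The induction hypothesis replaces that block by a $B_1B_1$ measurement or a CNOT on $(Q_0,Q_{N-1})$. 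Pauli byproducts on $Q_{N-1}$ only flip later outcomes.
\end{itemize}
What remains is a three-qubit check in both temporal orders. This includes the cases where the first link is a CNOT rather than a Pauli measurement. For example, prepare $Q_{N-1}$ in $\ket{+}$, use it as the control of a CNOT onto $Q_0$, then apply $X_{N-1}X_N$ and a $Z$ readout; this realizes an $X_0X_N$ measurement.

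This is your strengthened hypothesis made precise. Without the block-commutation step and the two-order local check, your argument covers at most one execution order, not the arbitrary orders the lemma claims.
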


\begin{table}[tb]
    \centering
    \caption{Summary of effective two-body operations implemented by a chain of two-body measurements as described in \cref{lem:simplify_chain}.}
    \label{tab:simplify_chain}
    \begin{tabular}{ccc}
        \toprule
        \; Boundary Type of $Q_0$ \; & \; Boundary Type of $Q_N$ \; & Effective Operation \\
        \midrule
         $X$  &  $X$  & $XX$ \\
         $X$  &  $Z$  & CNOT (Control: $Q_N$, Target: $Q_0$) \\
         $Z$  &  $X$  & CNOT (Control: $Q_0$, Target: $Q_N$) \\
         $Z$  &  $Z$  & $ZZ$ \\
        \bottomrule
    \end{tabular}
\end{table}

Note that logical qubit movement can be reduced to either an $XX$ or $ZZ$ measurement, depending on the boundary type, since both measurement types can transport a logical data qubit from one end to the other under appropriate pre- and post-processing. Thus, we only need to verify cases where the target operation is an $XX$ or $ZZ$ measurement, or a CNOT gate.

While this lemma is sufficient for verifying the spacetime routing technique in 2D architectures, we slightly generalize it to accommodate 2.5D architectures equipped with transversal CNOTs.

\begin{remark}
    \label{rem:cnot_chain_simplification}
    \cref{lem:simplify_chain} generalizes to cases where each $M_i$ may be a CNOT gate. This extension is justified by decomposing each CNOT gate into its equivalent pair of $XX$ and $ZZ$ measurements. Consequently, we treat the basis of CNOT controls and targets as $Z$ and $X$, respectively, and determine the initialization and measurement basis of ancillae as in \cref{lem:simplify_chain}.
\end{remark}

\subsubsection{Reduction for 2D layouts}
\label{app:spacetime_2d_simplification}

We begin by focusing on 2D layouts. Given a 3D path that satisfies the three conditions presented in \cref{subsec:overview_3drouting}, we verify that such a path realizes the intended two-body operations, specifically joint $XX$ and $ZZ$ measurements and CNOT gates.

A spacetime path can be decomposed into an alternating sequence of spatial segments and temporal segments, which are joined at their endpoint voxels. Each spatial segment is assigned an $XX$ or $ZZ$ measurement, while each temporal segment corresponds to a bus qubit acting as an ancillary logical qubit.

As discussed in the main text, the measurement bases of the spatial segments are determined sequentially from the path ends. Recalling that a kink is defined as a temporal segment connecting two different boundary types at its ends, a temporal segment switches the measurement type if and only if it is a kink. Consequently, the boundary condition determines the measurement types of the two endpoint spatial segments, and the kink condition ensures that these measurement bases can be consistently assigned.

This assignment yields a chain of joint $XX$ and $ZZ$ measurements, which can be simplified to a single two-body operation by following the initialization and measurement protocol described in \cref{lem:simplify_chain}. Moreover, the boundary conditions ensure that the resulting operation is the desired one under appropriate Pauli feedback. 

Finally, we demonstrate that the resources consumed by the operation correspond to the voxels occupied by the spacetime path. To this end, we analyze the computational cost of pre- and post-processing. Since the initialization and measurement of ancillae are completed in a single code cycle, the cost is negligible compared to the $d$ code cycles required for a single lattice-surgery operation. Furthermore, Pauli feedback can be implemented via the Pauli frame~\cite{FowlerGidney2018LowOverhead,Fowler2012TowardsPractical}, which shifts the computational burden solely to classical processing.

\subsubsection{Extension to 2.5D layouts}
\label{subsubsec:spacetime_routing_extension_proof}

We now extend this spacetime routing technique to 2.5D layouts. Regarding the inter-layer operations of the underlying 2.5D architecture, we consider inter-layer lattice surgery and transversal CNOT operations, specifically employing the latter in our implementation. For convenience, we restate the theorem below. Notably, this statement naturally generalizes to multiplanar configurations, provided that adjacent layers maintain the same connectivity as in the biplanar setting.

\spacetime*

\begin{proof}
    We will prove the statement for each of the two cases.

    \noindent
    \textbf{Proof of Case (i):}
    The proof for the 2D case extends naturally to this scenario. A spacetime path similarly decomposes into an alternating sequence of spatial and temporal segments, where each spatial segment is a 2.5D path that may traverse both layers and each temporal segment corresponds to a bus qubit. Then, following the same argument as in the 2D case, we assign measurement types to the spatial segments and reduce the resulting chain to the desired operation.

    \noindent
    \textbf{Proof of Case (ii):}
    In this scenario, a spacetime path decomposes into an alternating sequence of spatial and temporal segments, where each spatial segment is constrained to a single layer of the 2.5D layout. Unlike the 2D case, where a temporal segment lies on a single bus qubit in the 2D case, it spans two bus qubits adjacent in the inter-layer direction when transversal CNOTs are included. 

    As a crucial requirement for realizing a transversal CNOT gate, its two operand patches must share the same boundary orientation. Consequently, the measurement types of two spatial segments linked by a transversal CNOT differ if and only if the CNOT forms a kink. The kink condition thus remains a necessary condition for consistently assigning measurement types to spatial segments.
    
    After assigning these measurement types, we address the inter-layer transitions to convert the path into a circuit. To this end, two voxels adjacent in the inter-layer direction are treated as a single transversal CNOT within these spacetime resources. Given the initialization and measurement procedures in \cref{lem:simplify_chain}, the CNOT gate direction is arbitrary. Furthermore, because the execution time of a transversal CNOT is negligible relative to a single code beat, the operation can be performed at any point within the corresponding one code beat. This flexibility allows for elimination of one voxel, as detailed in \cref{rem:adjacency_relaxation}.
    The resulting chain of joint $XX$ and $ZZ$ measurements and CNOT gates simplifies to the desired two-body operation via \cref{rem:cnot_chain_simplification} and the boundary conditions. 
\end{proof}

Notably, while the path condition for 2D layouts requires adjacent voxels in a path to share a common face, this condition for 2.5D layouts can be slightly relaxed (see \cref{fig:omissible_voxel}). \cref{rem:adjacency_relaxation} states a sufficient condition for omitting a voxel from a spacetime path, although this optimization is not utilized in our implementation.

\begin{remark}
    \label{rem:adjacency_relaxation}
    The path condition for 2.5D layouts may be relaxed by omitting a voxel $v$ from an inter-layer transition, provided that specific connectivity requirements are met. Let $u$ be the paired voxel in the transition and $w$ the neighbor of $v$ distinct from $u$. 
    \begin{itemize}
        \item \textbf{Case (i):} $v$ can be omitted if $v$ and $w$ are spatially adjacent. This can be easily verified via the architectural connectivity between $u$ and $w$ (see \cref{sec:25d_layout}). 
        
        \item \textbf{Case (ii):} $v$ can be omitted if $v$ and $w$ are temporally adjacent. Specifically, if $w$ precedes (resp. succeeds) $v$, a transversal CNOT must be applied at the beginning (resp.\ end) of $u$ and $v$. 
    \end{itemize}
    After the elimination of $v$, the remaining voxels $u$ and $w$ share only an edge, rather than a face.
\end{remark}

\begin{figure}
    \centering
    \includegraphics[width=\linewidth]{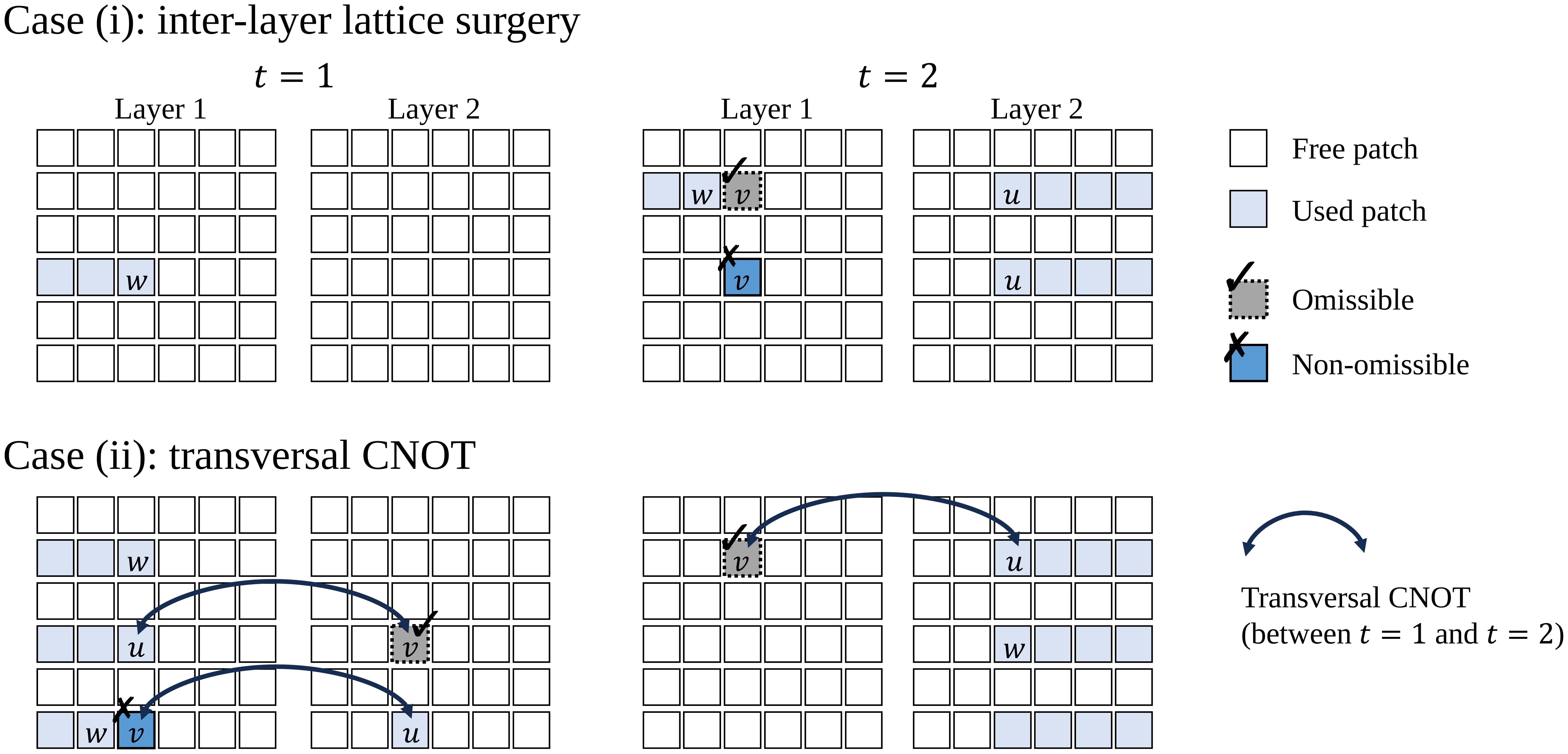}
    \caption{Example of omissible voxels in 2.5D layouts.}
    \label{fig:omissible_voxel}
\end{figure}

\subsection{Termination Guarantee of Kink Parity Correction for Double-slice Routing}
\label{subsec:termination_guarantee}

In this section, we demonstrate that the kink parity correction of double-slice routing succeeds under three assumptions: (i) the two target qubits have the same boundary orientation, (ii) none of the temporally adjacent voxels in the two focused time slices is blocked by other paths, and (iii) the path before correction is the shortest path. As a corollary, provided that all data qubits share a common boundary orientation, the entire algorithm is guaranteed to terminate successfully. This is because routing for each instruction will eventually be attempted in two time slices free from obstruction if the algorithm repeatedly fails to route the path or correct the kink parity.

Since the argument for 2.5D architectures equipped with inter-layer lattice surgery is similar to that for 2D architectures, we assume that 2.5D architectures support transversal CNOTs as inter-layer operations in the following proof. While the following discussion focuses on the biplanar configuration, its extension to multiplanar architectures can be achieved by using a similar technique as in \cref{subsubsec:proj_kink_25d}.

Our proof relies on the following claims:
\begin{enumerate}[leftmargin=*, label=\textit{Claim~\arabic*}.]
    \item Any path connecting identical (resp.\ different) boundary types has an even (resp.\ odd) number of 90-degree corners.
    \label{claim:degree}

    \item The path lies on a single time slice and contains at most one inter-layer movement. Consequently, the path has at most one kink due to inter-layer movement.

    \item The kink parity of the path can be toggled if the first 90-degree corner encountered from an endpoint is not a kink. Specifically, pinching the two voxels forming the corner closer to the endpoint increases the number of kinks by one.
\end{enumerate}
Claim 1 directly follows from assumption (i), whereas Claim 2 follows from assumptions (ii) and (iii). The adjustment in Claim 3 is always feasible owing to assumption~(ii). While one might be concerned that this adjustment could lead to self-intersection, such a case contradicts assumption~(iii), which imposes that the initial path is a shortest path.

Based on these facts, we establish the proof for each operation in 2D and 2.5D layouts. We first prove that the kink parity can be successfully corrected for $XX$ and $ZZ$ measurements, to which logical qubit movement is reduced as well. In this case, the kink parity should be corrected to even. In 2D layouts, Claim 2 ensures that the path contains no kinks, satisfying the kink condition. In 2.5D layouts, we show that the kink parity can be corrected when the path has one kink. As Claim 1 guarantees that the path has an even number of 90-degree corners, we have at least one non-kink corner. By traversing the path from at least one of its endpoints, a non-kink corner can be reached first. Then, Claim 3 guarantees that the kink parity can be corrected.

We then prove that the kink parity can be corrected for CNOT operations, where the kink parity should be corrected to odd. In both 2D and 2.5D layouts, Claim 2 ensures that the correction is required only when the path has no kinks. Since Claim 1 guarantees that the path has an odd number of 90-degree corners, we can apply Claim 3 to toggle the kink parity. This completes the proof of successful correction.

Finally, we outline the extension to the multiplanar setting. \cref{subsubsec:proj_kink_25d} defines \textit{adjustable} kinks as those that do not involve inter-layer transitions. Given the assumption on the shared boundary orientation, any path without non-kink corners must satisfy the kink condition. Hence, if a path does not satisfy the kink condition, it must contain a non-kink corner. Consequently, analogous to Claim 3, pinching the first corner excluding non-adjustable kinks successfully corrects the kink parity, because all preceding corners are non-adjustable kinks.

\subsection{Extension of Routing Methods to 2.5D Layouts}

Here, we discuss the extension of the three routing methods described in \cref{subsec:routing_methods}.
While the routing strategies can be straightforwardly extended to 2.5D layouts by changing the adjacency of patches, it is nontrivial to extend the kink parity correction techniques. Note that the following extension naturally accommodates multiplanar architectures and is not restricted to biplanar configurations.

\subsubsection{Kink Parity Correction for Single-slice Routing}
\label{subsubsec:single_kink_25d}

While the long-range CNOT approach is also applicable to 2.5D architectures, we can develop a more efficient yet heuristic correction technique for 2.5D architectures accepting transversal CNOTs, because kinks can be formed by crossing qubit plane layers. 

To exploit this capability, we implement a strategy to generate a set of candidate paths and select the shortest one that satisfies the kink condition. These candidate paths are generated by specifying two adjacent patches and finding the shortest path traversing them. While this approach is efficient and generally effective, its success depends on the ancillary routing space.

Our implementation of single-slice routing for 2.5D layouts also utilizes this technique, since it assumes that the 2.5D layouts are equipped with transversal CNOTs. In our evaluation, the method worked flawlessly with negligible overhead for the 1/4-floorplan, which is our primary focus. However, we observed instances where the approach failed to find a valid path in a 4/9-floorplan, where the ancillary routing space is more constrained.

\subsubsection{Kink Parity Correction for Projective Routing}
\label{subsubsec:proj_kink_25d}

We first discuss the correction technique for 2D layouts, introduced by Ref.~\cite{hamadaEfficientHighperformanceRouting2024}. Because spacetime paths are constructed by stacking a 2D path onto the current scheduling result in the time direction, the voxels above the spacetime path are always available. The correction technique exploits this free space to flip the kink parity through the following steps.
\begin{enumerate}[leftmargin=*, label={Step~\arabic*}.]
    \item Traverse the spacetime path from one end and identify the initial 90-degree corner, which may be a kink.
    \item If this corner is not a kink, pinch the corner voxel and its predecessor in the time direction. As this adjustment always flips the kink parity, terminate the procedure.
    \item If this corner is a kink, align the corner voxel and its two neighbors at their maximum height to ensure the corner is no longer a kink.
    \item If the kink parity remains unchanged after Step 3, apply the adjustment from Step 2 to the resulting non-kink corner, then terminate the procedure.
\end{enumerate}
As the choice of the starting endpoint for traversal in Step 1 is arbitrary, our implementation prioritizes the direction such that the first corner is not a kink. If both possible first corners are of the same type (either both kinks or both non-kink corners), the one with the lower height is selected.

Assuming that the two target qubits share a common boundary orientation, this procedure is guaranteed to flip the kink parity. This can be verified by establishing the following claims:
\begin{enumerate}[leftmargin=*, label=\textit{Claim~\arabic*}.]
    \item If the initial spacetime path does not satisfy the kink condition, it has at least one 90-degree corner.
    \item The adjustment from Step 2 to the first 90-degree corner always flips the parity.
\end{enumerate}
Claim 1 can be verified by considering two cases: if the operand boundaries differ, the existence of a 90-degree corner follows directly; if the boundaries are the same, the kink parity is odd due to the violation of the kink condition, which implies the existence of a 90-degree corner. Claim 2 is justified by the choice of pinched voxels. Although pinching two voxels only affects corners centered at those voxels, selecting the first corner ensures that the predecessor cannot be the center of a corner. This completes the proof of the entire correctness.

We now generalize this correction technique to 2.5D layouts. For architectures that support inter-layer lattice surgery, the same methodology remains valid. In the following discussion, we focus on transversal CNOTs.

The key idea is to distinguish between adjustable and non-adjustable kinks. A kink is termed \textit{adjustable} if its temporal transition does not include inter-layer movement. Notably, non-adjustable kinks persist as kinks regardless of any temporal voxel adjustments.

Consequently, the only procedural modification is restricted to Step 1: the search for the initial 90-degree corner is updated to skip non-adjustable kinks. The proof of correctness follows from an analogous discussion. As a stronger version of Claim 1, we can prove the existence of a non-kink corner as follows: as the kink condition is violated, not all corners are kinks, which in turn ensures the existence of a non-kink corner. Claim 2 remains valid because any preceding corners are non-adjustable kinks. This completes the correctness proof for the generalized correction technique, assuming shared boundary orientations.

\section{CBPI Stack}

\subsection{Computation of CBPI Stack}
\label{sec:cbpi_detail}

In this section, we detail the computation of the base and each hazard. To quantify the effects of the four hazards in terms of CBPI and circuit volume, we execute the scheduler under the following five incremental scenarios, ranging from a theoretical ideal to the final feasible solution. Each hazard is then calculated as the performance degradation between consecutive scenarios.
\begin{enumerate}
    \item \textit{Base}: The first scenario provides the ideal performance where all hazards are absent. We assume that each instruction exclusively demands one code beat (i.e., one voxel) for each of its operands. Operands can be processed asynchronously, MSFs are assumed to be infinite, and spatial routing constraints are omitted. Consequently, the total execution time equals the maximum number of instructions applied to any single data qubit, and the total circuit volume is the product of this execution time and the number of data qubits.

    \item \textit{Operand Synchronization}: The second scenario accounts for the synchronization of operands imposed by single- or double-slice routing, representing the ideal performance of each routing framework. Specifically, single-slice routing requires all operands of each instruction to be completely synchronized, whereas double-slice routing slightly relaxes this constraint, limiting the difference in their execution timings to at most one code beat. At this stage, each instruction still demands data qubit occupation for one code beat, and spatial routing and MSF constraints remain excluded. Crucially, as single-slice routing requires two code beats for a CNOT gate in 2D layouts, this and subsequent scenarios assume that each operation consumes two code beats, effectively doubling the raw scheduler output.

    \item \textit{CX Path Congestion}: The third scenario introduces spatial constraints for all $\texttt{CX}$ instructions within the bus qubit space; specifically, the scheduler finds a routing path between operands via bus qubits. In contrast, the scheduler does not yet attempt to route paths involving an MSF patch. Furthermore, the kink condition and its associated correction techniques are ignored at this stage.
        
    \item \textit{Magic Path Congestion}: The fourth scenario considers the finite throughput of the MSF patches and the spatial congestion due to their incident paths. We now require each operation to occupy $(\tau + 1)$ code beats of an MSF patch if it consumes a magic state. The kink condition and its associated correction techniques are still ignored at this stage.
    
    \item \textit{Kink Parity Correction}: The fifth scenario finally observes all constraints, including the kink condition. This represents the actual execution time and circuit volume of the feasible scheduler output.
\end{enumerate}

\subsection{Breakdown of Performance Comparison between Outer and Inner Layouts}
\label{sec:breakdown_outer_inner}

\Cref{fig:magic_breakdown} presents the performance breakdown of the three routing algorithms under different factory layouts and $\tau$ values, evaluated using 2D layouts and the SELECT-6 circuit simulating the 2D Heisenberg model. While inner layouts successfully mitigate the magic path hazard as expected, their advantage diminishes as $\tau$ increases from $0$ to $2$. Recall that the combination of projective routing and inner layouts may cause severe spatial congestion, as discussed in \cref{subsec:space_time_routing_limitation}, which is neglected in these numerical experiments.

\begin{figure}
    \centering
    \includegraphics[width=\linewidth]{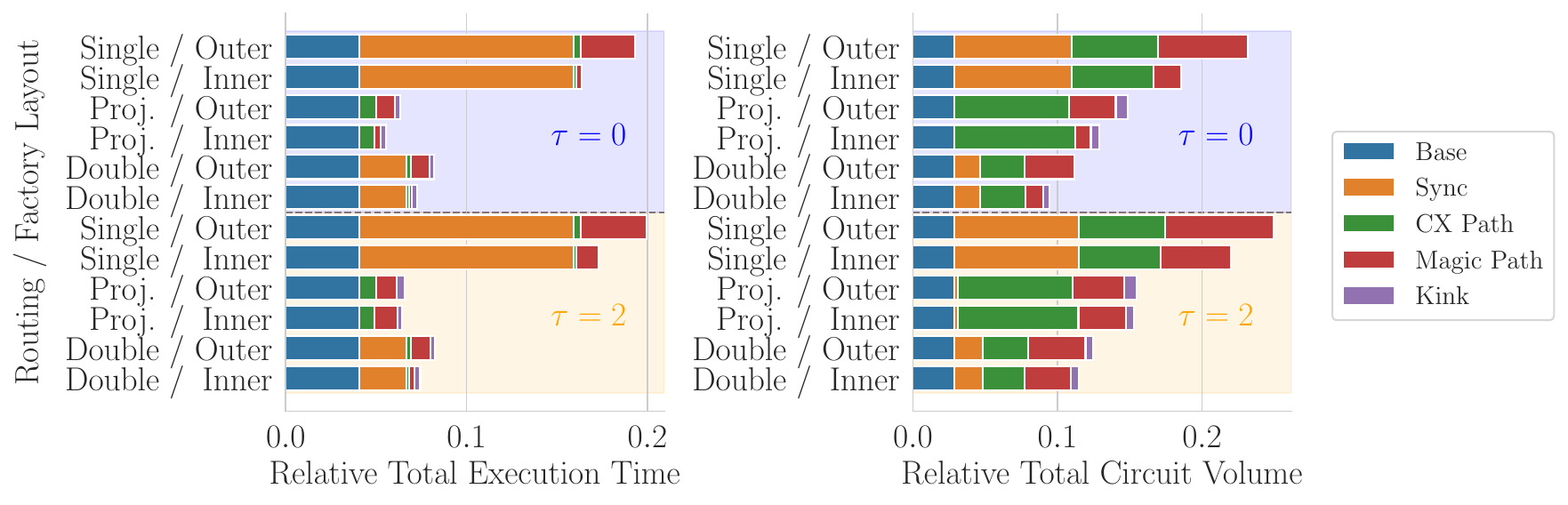}
    \caption{Performance breakdown of the three routing algorithms under different factory layouts and $\tau$ values, evaluated using 2D layouts and the SELECT-6 circuit simulating the 2D Heisenberg model. For enhanced clarity, all values are normalized relative to the EDPC baseline, which is omitted from the plot for visibility.}
    \label{fig:magic_breakdown}
\end{figure}

\section{Mapping}
\label{app:mapping}

In this section, we present the solution procedure and detailed numerical results for the mapping optimization problem \eqref{prob:mapping_optimization_problem} discussed in \cref{sec:mapping}. Recall that $\{x_i\}_{i=1}^n \subseteq X$ specifies the assignment of each qubit $i$ to one of the candidate patches in $X$ on the qubit plane.

\subsection{Solution Method}
\label{app:mapping_how_to_solve}

We solve Problem~\eqref{prob:mapping_optimization_problem} using simulated annealing (SA), a stochastic optimization method that is well suited to large-scale combinatorial problems with a complex search space.
Starting from a randomly generated initial configuration, SA iteratively explores neighboring configurations while gradually reducing the probability of accepting worse solutions, thereby converging to a near-optimal solution.

The neighborhood structure employed in SA consists of two types of moves, namely, reassigning a qubit or an MSF to an unoccupied patch, and swapping the assignments of two qubits or MSFs.
For a current configuration $\{x_i\}_{i=1}^n$ and a neighboring configuration $\{x'_i\}_{i=1}^n$, we evaluate the change in the objective function by considering only the components affected by the move.
Instead of recomputing the full objective value, we directly compute the incremental difference induced by the local modification.
This difference-based update enables each SA iteration to be computationally efficient.
With this implementation, the number of iterations is fixed to $1{,}000{,}000$, which can be executed within a few seconds for almost all tested instances.

When the movement of MSFs is allowed, corresponding to the inner case defined in \cref{subsec:sa_setting}, it becomes necessary to update the distance function $d_{\mathrm{MSF}}$ in Problem~\eqref{prob:mapping_optimization_problem}.
This update requires computing the distance from each data qubit to its nearest MSF.
Formally, this corresponds to a multiple-source shortest-paths problem on the chip graph, where all MSF locations act as sources.
This problem can be solved using a breadth-first search (BFS) algorithm, whose computational cost scales linearly with the number of patches.
As a result, even when MSF movements are included, the additional computations required by SA remain modest, and the overall optimization procedure retains its efficiency.

\subsection{Results for Mapping Evaluation}
\label{app:SA_results}

We present the numerical results of the mapping evaluation in \cref{fig:mapping_evaluation_results}, using the method described in \cref{app:mapping_how_to_solve}.
The evaluation was performed on four types of SELECT circuits: Fermi--Hubbard 2D, $\mathrm{Z}_2$ Lattice Gauge 2D, Schwinger, and Random Local.
The specific parameters for each Hamiltonian are written in \cref{tab:Hamiltonians}.

As discussed in \cref{sec:mapping}, the general trend shows that the SA approach yields the shortest total execution time in code beats. However, the objective function in Problem~\eqref{prob:mapping_optimization_problem}, optimized by SA, is not directly the total execution time $T_P$ (\cref{sec:performance_metric}).
Thus, in rare cases, a purely random assignment can even outperform the SA-based result.
Combining the scheduling method with direct checks of $T_P$ during SA is one of the promising directions for further improvement.
As suggested in recent work~\cite{molaviDependencyAwareCompilationSurface2025}, explicitly incorporating gate-operation dependencies into the optimization process could potentially enable more efficient scheduling and resource allocation.

\begin{figure}[t]
    \centering

    \begin{subfigure}
        [t]{0.5\columnwidth}
        \centering
        \includegraphics[width=\columnwidth]{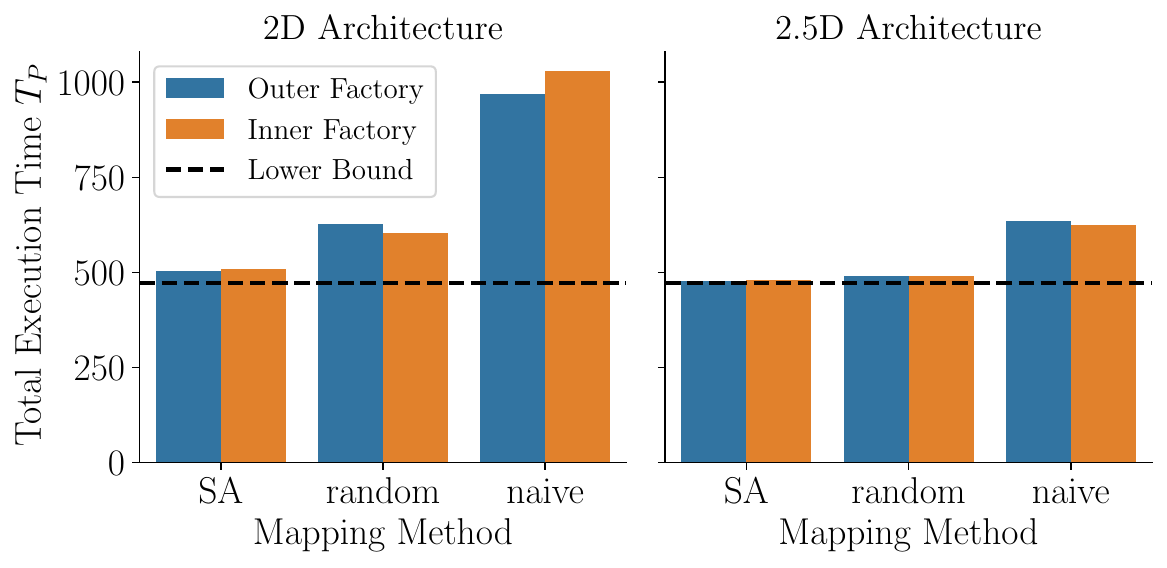}
        \subcaption{Fermi--Hubbard 2D}
    \end{subfigure}%
    \begin{subfigure}
        [t]{0.5\columnwidth}
        \centering
        \includegraphics[width=\columnwidth]{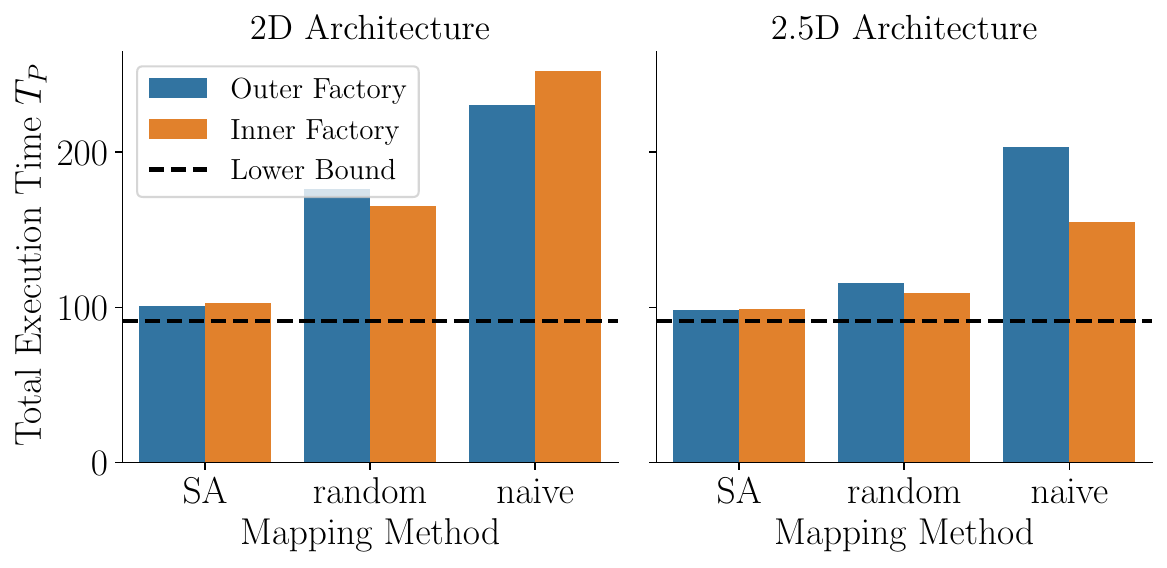}
        \subcaption{Z2 Lattice Gauge 2D}
    \end{subfigure}

    \begin{subfigure}
        [t]{0.5\columnwidth}
        \centering
        \includegraphics[width=\columnwidth]{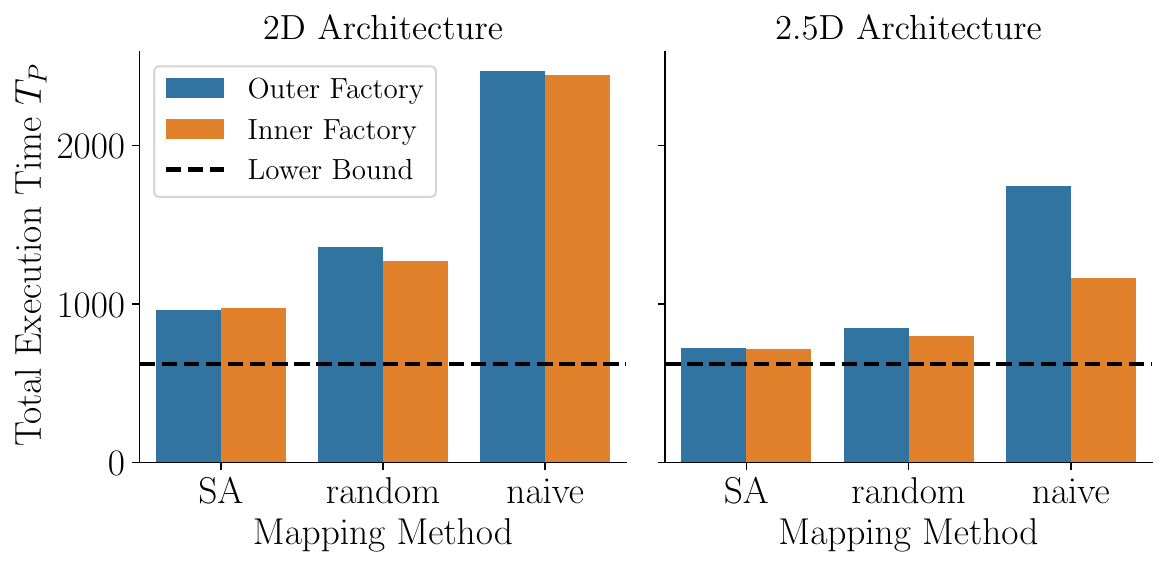}
        \subcaption{Schwinger}
    \end{subfigure}%
    \begin{subfigure}
        [t]{0.5\columnwidth}
        \centering
        \includegraphics[width=\columnwidth]{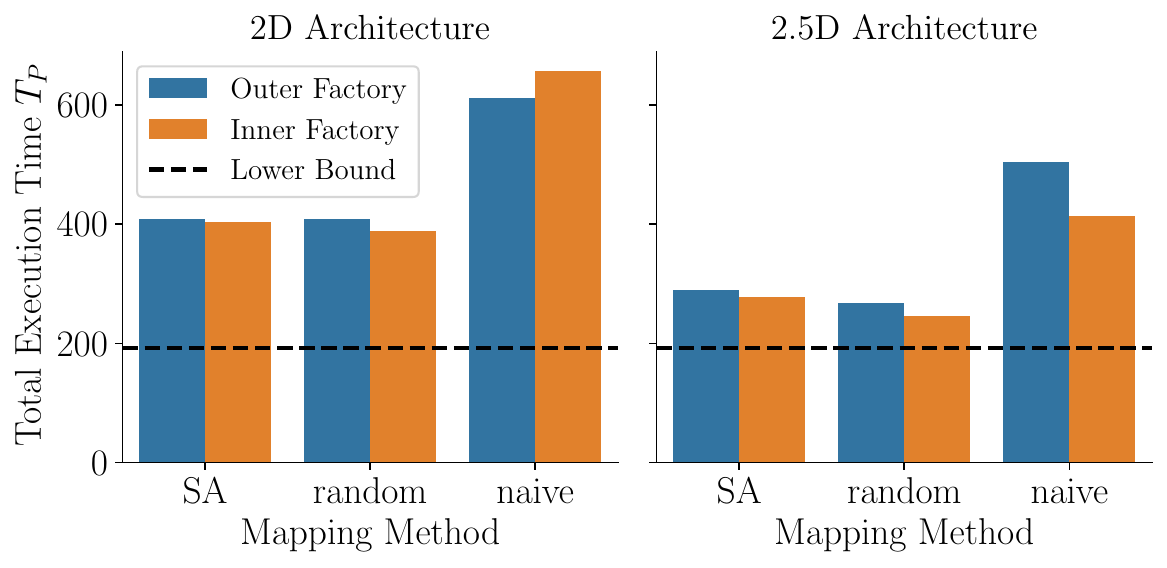}
        \subcaption{Random local}
    \end{subfigure}

    \caption{Mapping evaluation results for Hamiltonians for the part of SELECT circuits in \cref{tab:Hamiltonians}.
        The setting and baselines follow \cref{fig:total_time}.}
    \label{fig:mapping_evaluation_results}
\end{figure}

\section{Visualization Results}

We present some of the compilation visualizations in \cref{fig:individual}, following the visualization methodology for lattice surgery results described in \cref{fig:layout}.
The visualization code is publicly available on GitHub~\cite{github}.
Here, we focus on the results obtained using the projective and double routing algorithms under the outer 1D configuration, with SA employed for the mapping.

Several characteristic differences can be identified from these results.
The projective approach efficiently utilizes ancilla qubits and tends to reduce the total code beats by extending routing paths in advance.
This strategy often results in longer paths, which can lead to increased circuit volume and reduced spatial efficiency.
In contrast, the double and single approaches generally yield shorter routing paths.
Since each path occupies ancilla qubits for at most two time steps, fewer paths remain idle, which contributes to a smaller circuit volume.

\begin{figure*}
    \centering
    \includegraphics[width=0.78\textwidth]{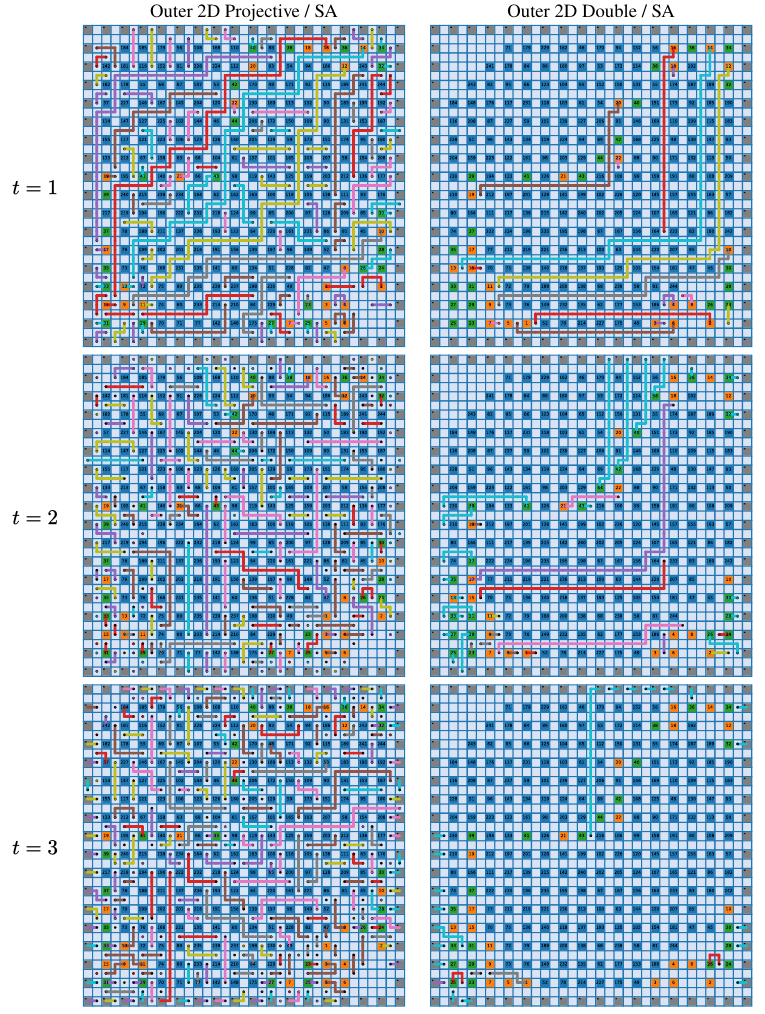}
    \caption{Visualization of compilation results obtained using the projective and double routing algorithms.
        We used the 2D outer configuration with SA mapping, and the SELECT circuit for the Fermi--Hubbard 2D Hamiltonian described in \cref{tab:Hamiltonians}.
        The visualization shows the first three code beats results ($t=1,2,3$).
        The projective approach effectively exploits ancilla qubits and typically requires fewer total code beats, at the cost of a larger circuit volume.
        In contrast, the double approach exhibits complementary behavior.}
    \label{fig:individual}
\end{figure*}

\end{document}